\documentclass[conference]{IEEEtran}

\newcommand{\myparatight}[1]{\smallskip\noindent{\bf {#1}:}~}

\usepackage{amsthm}
\usepackage{amsmath}

\usepackage{float}
\usepackage{graphics}
\usepackage{graphicx}

\usepackage{algorithm}
\usepackage{algorithmicx}
\usepackage{algpseudocode}

\usepackage{bm}
\usepackage{color}
\usepackage{multirow}
\usepackage{makecell}
\usepackage{balance}
\usepackage{pdfx}
\usepackage{subfig}

\usepackage{amssymb}
\usepackage{xspace}

\usepackage{placeins}

\usepackage[misc,geometry]{ifsym}

\newtheorem{assumption}{Assumption}
\newtheorem{thm}{Theorem}
\newtheorem{lem}{Lemma}
\newtheorem*{remark}{Remark}

\definecolor{myurlcolor}{rgb}{0.1, 0.2, 0.8}

\usepackage{hyperref}
\hypersetup{
	colorlinks=true,
	linkcolor=blue,
	filecolor=magenta,      
        urlcolor=myurlcolor,
        citecolor=blue,
	pdftitle={Overleaf Example},
	pdfpagemode=FullScreen,
}

\usepackage{color, colortbl}
\definecolor{greyL}{RGB}{230,248,255}

\newcommand{\alg}{\textsf{FoundationFL}\xspace}

\algnewcommand\algorithmicforpara{\textbf{for}}
\algnewcommand\algorithmicdoinparallel{\textbf{do in parallel}}
\algdef{S}[FOR]{ForParallel}[1]{\algorithmicforpara\ #1\ \algorithmicdoinparallel}

\pagestyle{plain}

\hyphenation{op-tical net-works semi-conduc-tor}

\begin{document}

\title{Do We Really Need to Design New Byzantine-robust Aggregation Rules?}

\author{\IEEEauthorblockN{
Minghong Fang\IEEEauthorrefmark{1}\textsuperscript{, \Letter}\thanks{\Letter \ Minghong Fang is the corresponding author.},
Seyedsina Nabavirazavi\IEEEauthorrefmark{2},
Zhuqing Liu\IEEEauthorrefmark{5},\\
Wei Sun\IEEEauthorrefmark{4}, 
Sundararaja Sitharama Iyengar\IEEEauthorrefmark{2},
Haibo Yang\IEEEauthorrefmark{3}}
\IEEEauthorblockA{\IEEEauthorrefmark{1}University of Louisville, \IEEEauthorrefmark{2}Florida International University, \IEEEauthorrefmark{5}University of North Texas,} \newline\IEEEauthorrefmark{4}Wichita State University,\IEEEauthorrefmark{3}Rochester Institute of Technology}

\IEEEoverridecommandlockouts
\makeatletter\def\@IEEEpubidpullup{6.5\baselineskip}\makeatother
\IEEEpubid{\parbox{\columnwidth}{
		Network and Distributed System Security (NDSS) Symposium 2025\\
		24-28 February 2025, San Diego, CA, USA\\
		ISBN 979-8-9894372-8-3\\
		https://dx.doi.org/10.14722/ndss.2025.241796\\
		www.ndss-symposium.org
}
\hspace{\columnsep}\makebox[\columnwidth]{}}

\maketitle


\begin{abstract}
Federated learning (FL) allows multiple clients to collaboratively train a global machine learning model through a server, without exchanging their private training data. However, the decentralized aspect of FL makes it susceptible to poisoning attacks, where malicious clients can manipulate the global model by sending altered local model updates. To counter these attacks, a variety of aggregation rules designed to be resilient to Byzantine failures have been introduced. Nonetheless, these methods can still be vulnerable to sophisticated attacks or depend on unrealistic assumptions about the server. In this paper, we demonstrate that there is no need to design new Byzantine-robust aggregation rules; instead, FL can be secured by enhancing the robustness of well-established aggregation rules. To this end, we present FoundationFL, a novel defense mechanism against poisoning attacks. FoundationFL involves the server generating synthetic updates after receiving local model updates from clients. It then applies existing Byzantine-robust foundational aggregation rules, such as Trimmed-mean or Median, to combine clients' model updates with the synthetic ones. We theoretically establish the convergence performance of FoundationFL under Byzantine settings. Comprehensive experiments across several real-world datasets validate the efficiency of our FoundationFL method.
\end{abstract}

\section{Introduction} \label{sec:intro}
In recent years, federated learning (FL) has emerged as a promising approach to distributed learning~\cite{mcmahan2017communication}. It allows multiple clients to collaboratively train a global machine learning model (called \emph{global model}) under the coordination of a central server, all while respecting the privacy of clients' sensitive training data. Essentially, in each training round, the server distributes the current global model to all clients or a subset of them. Each selected client refines its local machine learning model (called \emph{local model}) by using this global model and its own local training data. Subsequently, the client sends its local model update back to the server. Upon receiving updates from clients, the server aggregates these updates into a global model update, which it then integrates to further update the global model.
FL has been implemented across a range of practical tasks and applications, including credit risk assessment~\cite{webank}, predictive text input~\cite{gboard}, and speech recognition~\cite{paulik2021federated}.

However, the decentralized nature of FL poses distinct challenges, with one of the most significant being its susceptibility to poisoning attacks~\cite{bagdasaryan2020backdoor,baruch2019little,bhagoji2019analyzing,blanchard2017machine,fang2020local,shejwalkar2021manipulating,tolpegin2020data,xie2019dba,yin2024poisoning,zhang2024poisoning}. In these attacks, malicious clients, under the control of an attacker, attempt to compromise the integrity of the global model. They do this by manipulating their local training data or sending carefully crafted updates directly to the server. 
Depending on the attacker's objectives, poisoning attacks can be categorized as untargeted~\cite{blanchard2017machine,fang2020local,shejwalkar2021manipulating,tolpegin2020data} or targeted~\cite{bagdasaryan2020backdoor,baruch2019little,cao2020fltrust}. 
In untargeted attacks, the goal is to degrade the overall performance of the global model.
In contrast, targeted attacks aim to induce incorrect predictions specifically on testing inputs chosen by the attacker, while leaving predictions for other inputs unaffected.
It has been demonstrated that a single malicious client is sufficient to successfully compromise an FL system that uses a straightforward average aggregation strategy. In this scenario, one malicious client can arbitrarily manipulate the final aggregated result~\cite{blanchard2017machine}.

In response to poisoning attacks, researchers have shifted from using a straightforward average aggregation rule to developing robust aggregation methods to increase the resilience of FL. These include the Trimmed-mean and Median aggregation rules introduced by~\cite{yin2018byzantine}, which are termed \emph{Byzantine-robust foundational aggregation rules}. Trimmed-mean is a coordinate-wise aggregation protocol that removes some of the largest and smallest extreme values for each dimension before averaging the remaining values. In contrast, the Median method calculates the coordinate-wise median of the clients' local model updates. Despite their robustness, recent research indicates that these rules are still susceptible to advanced poisoning attacks~\cite{fang2020local,shejwalkar2021manipulating}. To combat these vulnerabilities, newer and more complex aggregation rules have been developed~\cite{blanchard2017machine,cao2020fltrust,fereidooni2023freqfed,munoz2019byzantine,nguyen2022flame,pan2020justinian,park2021sageflow,rieger2022deepsight,wang2022flare,xie2019zeno}. These, however, either depend heavily on the FL system having access to a clean dataset~\cite{cao2020fltrust,pan2020justinian,park2021sageflow,wang2022flare,xie2019zeno} or continue to be prone to sophisticated attacks~\cite{shejwalkar2022back,xie2024poisonedfl}. This ongoing escalation between attackers and defenders often results in a costly and potentially unsustainable arms race. This situation raises an essential research question: \emph{\textbf{Do we really need to design new Byzantine-robust aggregation rules?}}

In this paper, we find a promising answer to the aforementioned question. Instead of developing complex new Byzantine-robust aggregation protocols, we aim to enhance the robustness of FL systems by employing well-established Byzantine-robust aggregation methods like Trimmed-mean and Median. The unique aspect of FL is the non-identical and non-independent (Non-IID) distribution of training data among clients, which introduces substantial diversity. This diversity allows malicious clients to manipulate their local model updates to undermine the FL system, while remaining distinct from benign clients. To mitigate these issues of heterogeneity, in our proposed \alg framework, the server introduces synthetic updates in each global training round. The primary challenge then becomes determining these synthetic updates. To tackle this, the server calculates a closeness score for each client in every round to assess how their local model updates align with the most extreme updates. The server then selects the client's local model update that deviates the most from these extreme updates. This chosen update is used as the basis for the synthetic updates, which are then mixed with the clients’ local model updates. Ultimately, the server applies Byzantine-robust foundational aggregation rules like Trimmed-mean or Median to combine both the clients' local and the synthetic updates.

We offer theoretical guarantees for our \alg framework under poisoning attacks. Specifically, we provide a theoretical demonstration that the global model learnt through our \alg, even when subjected to poisoning attacks, converges with high probability to the optimal global model that would be obtained in the absence of attacks, given certain mild assumptions. We conduct comprehensive evaluations of our \alg across 6 datasets spanning various domains, against 12 different poisoning attacks, and comparing with 10 FL aggregation rules. Our findings show that our proposed \alg significantly surpasses current Byzantine-robust FL methods in performance.

We summarize our main contributions in this paper as follows:

\begin{list}{\labelitemi}{\leftmargin=1em \itemindent=-0.08em \itemsep=.2em}
\item
We introduce \alg, a robust aggregation framework designed to combat poisoning attacks within FL environments.

\item
We demonstrate theoretically that \alg remains resilient to poisoning attacks under commonly accepted assumptions within the Byzantine-robust FL community.

\item
Our thorough experiments across diverse benchmark datasets, various poisoning attack scenarios, and practical FL setups validate the effectiveness of our \alg framework.

\end{list}	


\section{Background and Related Work} \label{sec:related}

\myparatight{Notations} 
In this paper, the $\ell_2$-norm is denoted by $\left\| \cdot \right\|$. Furthermore, for any natural number $n$, the set $\left\{1, \ldots, n\right\}$ is represented as $[n]$. Additionally, matrices and vectors are specified using bold typeface.

\subsection{Federated Learning}

Federated learning (FL) usually includes a server and $n$ participating clients that work together to train a global model without exchanging private data. 
We denote the local training dataset of client $i$ as $D_i$, where $i \in [n]$. The goal of FL is to minimize the following global
objective function:
\begin{align}
    \label{fl_goal}
   {L}(\bm{\theta})= \frac{1}{n}\sum\limits_{i=1}^n {L}_i(\bm{\theta}),
\end{align}
where $\bm{\theta} \in \mathbb{R}^d$ is the model parameter, $d$ is the dimension of $\bm{\theta}$,
${L}_i(\bm{\theta}) = \frac{1}{|D_i|}\sum\nolimits_{z \in D_i} {l}(\bm{\theta}, z)$ is the local training objective (empirical loss) of client $i$, $|D_i|$ denotes the count of training example for the client $i$.

FL solves the above Problem~(\ref{fl_goal}) through an iterative process.
Specifically, in each global training round $t$, the following three specific steps are executed:

\begin{list}{\labelitemi}{\leftmargin=1em \itemindent=-0.08em \itemsep=.2em}

\item \textbf{Step I: Global model synchronization.}
The server distributes the current global model $\bm{\theta}^t$ to all clients or a selected group of them.

\item \textbf{Step II: Local model updating.}
Each client $i \in [n]$ fine-tunes its local model $\bm{\theta}_i^t$ by leveraging the current global model $\bm{\theta}^t$ and its local training dataset $D_i$.
Then, client $i$ transmits its local model update $\bm{g}_i^t = \bm{\theta}_i^t - \bm{\theta}^t$ to the server.

\item \textbf{Step III: Global model updating.}
After collecting model updates from the clients, the server employs the aggregation rule, denoted as $\text{Agg}$, to merge these updates to get a global model update. The global model is then updated as $\bm{\theta}^{t+1} = \bm{\theta}^{t} + \eta \cdot \text{Agg} \{\bm{g}_i^t: i \in [n]\}$, where $\eta$ is the learning rate.

\end{list}

FL repeats the aforementioned three steps across multiple global training rounds until a specified convergence condition is satisfied.
It's worth mentioning that various FL methods often employ distinct aggregation protocols~\cite{mcmahan2017communication,blanchard2017machine,yin2018byzantine}. For instance, in the case of the FedAvg~\cite{mcmahan2017communication} aggregation rule, the server combines the model updates as $\text{Agg} \{\bm{g}_i^t: i \in [n]\} = \frac{1}{n}\sum\limits_{i=1}^n \bm{g}_i^t$.

\subsection{Poisoning Attacks to FL}

Although FL's decentralized architecture offers privacy benefits, it also renders it susceptible to poisoning attacks.
In a FL environment, malicious clients may attempt to manipulate the global model's performance by interfering with the training process. This could involve corrupting their local training data (known as \emph{data poisoning attacks}~\cite{tolpegin2020data}) or tampering with their local model updates (known as \emph{local model poisoning attacks}~\cite{blanchard2017machine,fang2020local,shejwalkar2021manipulating}). The ultimate goal of these malicious clients is to degrade the global model's performance. For instance, the final learnt global model exhibits reduced accuracy when classifying testing examples.
For instance, in a label flipping attack~\cite{tolpegin2020data}, malicious clients flip the labels associated with their training data while leaving the underlying features unchanged. 
In the Trim attack~\cite{fang2020local}, malicious clients strategically manipulate their local model updates before sending them to the central server. This manipulation aims to exploit the vulnerabilities of Trimmed-mean~\cite{yin2018byzantine} or Median~\cite{yin2018byzantine} aggregation rule employed by the server, causing a significant deviation in the global model updates after the attack compared to the updates before the attack.

\subsection{Byzantine-robust Aggregation Rules}
In standard FL, the server combines the local model updates it receives by computing their averages~\cite{mcmahan2017communication}. Nevertheless, recent research~\cite{blanchard2017machine} has revealed that this aggregation method, based on averaging, is highly susceptible to Byzantine attacks. In such attacks, a single malicious client has the ability to manipulate the final aggregated outcome in arbitrary ways.
In order to safeguard FL against poisoning attacks, several Byzantine-robust foundational aggregation rules, such as Krum~\cite{blanchard2017machine}, Trimmed-mean~\cite{yin2018byzantine} and Median~\cite{yin2018byzantine}, have been suggested.
For instance, with the Krum~\cite{blanchard2017machine} aggregation rule, once the server receives $n$ local model updates from clients, it selects and outputs the local model update that has the smallest total distance to its $n-f-2$ closest neighbors, where $n$ is the total number of clients, $f$ is the number of malicious clients.
Trimmed-mean~\cite{yin2018byzantine} aggregation rule operates on a per-coordinate basis. Specifically, for every dimension, the server starts by eliminating the largest $c$ elements and the smallest $c$ elements, and subsequently computes the average of the remaining values, where $c$ is the trim parameter.
Median~\cite{yin2018byzantine} is another coordinate-wise aggregation method. 
In this approach, the server combines the received local model updates by determining the median value for each dimension of the updates.
In recent years, several Byzantine-robust advanced aggregation rules have been proposed~\cite{cao2020fltrust,nguyen2022flame,pan2020justinian,park2021sageflow,wang2022flare,xie2019zeno,ChenPOMACS17,fang2022aflguard,fang2024byzantine,kumari2023baybfed,Mhamdi18,xu2024robust}.
For instance, in Bulyan~\cite{Mhamdi18} aggregation rule, the server first leverage Krum~\cite{blanchard2017machine} aggregation rule to select a select of local model updates, then take the Median~\cite{yin2018byzantine} of these selected updates.
The authors in~\cite{cao2020fltrust,pan2020justinian,park2021sageflow,wang2022flare,xie2019zeno} assume that the server possesses a clean validation dataset, sourced from the same distribution as the overall training dataset. Utilizing this validation dataset, the server calculates a benchmark model update. This benchmark is then applied to determine if the local model updates received are either benign or malicious.

\myparatight{Limitations of existing robust aggregation rules}Firstly, current Byzantine-resistant aggregation methods are not entirely secure, as they remain susceptible to sophisticated poisoning attacks~\cite{fang2020local,shejwalkar2021manipulating}.
Secondly, numerous robust aggregation rules rely on strong assumptions regarding the server's capabilities, such as possessing a separate validation data~\cite{cao2020fltrust,pan2020justinian,park2021sageflow,wang2022flare,xie2019zeno}. However, this assumption is often impractical as it is challenging for the server to accurately know the distribution of clients' local training data. 
%
Moreover, the possession of validation data by the server could infringe upon privacy concerns, contradicting the core principles of FL's design.


\section{Threat Model} \label{sec:problem}

\myparatight{Attacker’s goal and knowledge}
In our paper, we consider the attack model as described in~\cite{fang2020local,shejwalkar2021manipulating,cao2020fltrust}. Specifically, the attacker manipulates certain malicious clients, which may either be fake clients injected by the attacker or benign clients compromised by the attacker. These controlled malicious clients send meticulously crafted updates of local models to the server to achieve the attacker's objectives. For instance, in untargeted attacks, the goal is to corrupt the resulting global model such that it incorrectly classifies a significant portion of test examples without distinction. In targeted attacks, the objective is to manipulate the global model so that it predicts specific instances chosen by the attacker to match predetermined labels.
We consider the worst-case but realistic attack scenario where the attacker knows the aggregation rule used by the server, and all clients' local model updates. For example, the server may public its aggregation protocol, and the attacker may eavesdrop the communication link in order to get access to local model updates on the benign clients. We note that in our proposed \alg framework, the server generates some synthetic updates. These generated synthetic updates are not access to the attacker, since the server in FL is secure, it hard and even impossible for the attacker to compromise the server in order to have access to these synthetic updates.

\myparatight{Defender’s goal and knowledge} 
Our goal is to develop an effective Byzantine-robust method that accomplishes the following three objectives:
\begin{list}{\labelitemi}{\leftmargin=1em \itemindent=-0.08em \itemsep=.2em}

\item \textbf{Competitive performance:} The proposed defense scheme for FL should also perform effectively in non-adversarial environments. Specifically, in the absence of malicious clients, the model trained using our algorithm should achieve a testing error rate comparable to that of averaging-based aggregation, which is known to deliver state-of-the-art performance in non-adversarial FL settings.

\item \textbf{Byzantine robustness:} The proposed method should demonstrate robustness against Byzantine attacks, both empirically and theoretically.

\item \textbf{Efficiency:} The proposed algorithm should not increase the communication costs between the server and clients, nor should it lead to significant computational demands on the server side.

\end{list}

Regarding the defender's knowledge, the defender (server) is unaware of the attacker's methods of conducting attacks. Additionally, the defender does not have knowledge about the distribution of the clients' local training data.
Note that, 
following~\cite{fang2020local,shejwalkar2021manipulating,shejwalkar2022back}, we assume in our threat model that the majority of clients are benign.


\section{Our Method} 
\label{our_method}

\subsection{Overview} 
In this section, we provide a formal demonstration that the server can enhance the robustness of the FL system by using established Byzantine-robust foundational aggregation rules. This indicates that creating new, complex Byzantine-robust aggregation protocols is unnecessary. In our framework, the server takes a proactive approach by generating synthetic updates upon receiving local model updates from clients. Following this, the server employs existing Byzantine-robust foundational aggregation protocols, such as Trimmed-mean or Median, to combine the local model updates from clients with the generated synthetic updates.

\subsection{\alg}

\myparatight{Trimmed-mean~\cite{yin2018byzantine}}Let $\bm{g}^t$ represent the global model update at the training round $t$, where $\bm{g}^t = \text{Agg} \{\bm{g}_i^t: i \in [n]\}$. The $k$-th component of vector $\bm{g}^t$, denoted $\bm{g}^t[k]$ for $k \in [d]$, is calculated using the Trimmed-mean aggregation method. In this method, for each dimension, the largest $c$ and smallest $c$ values are discarded, and the mean of the remaining values is computed. Specifically, $\bm{g}^t[k]$ is obtained by $\bm{g}^t[k] = \text{Trimmed-mean}\{\bm{g}_1^t[k], \ldots, \bm{g}_n^t[k]\}$.

\myparatight{Median~\cite{yin2018byzantine}}The Median is another rule for aggregation that also operates on each coordinate individually. For every dimension, the server calculates the median of the values from the clients' local model updates. Specifically, for the $k$-th dimension, $\bm{g}^t[k]$ is determined by $\bm{g}^t[k] = \text{Median}\{\bm{g}_1^t[k], \ldots, \bm{g}_n^t[k]\}$.

In FL, a distinctive feature is the non-identical and non-independent (Non-IID) distribution of clients' training data. This diversity across different clients means that the training data is not uniformly distributed, making it significantly varied. As clients train their local models using the current global model and their unique training data, even benign local model updates exhibit notable differences. As a result, this heterogeneity can mask the activities of malicious clients, who may take advantage of these differences to manipulate their local model updates and launch poisoning attacks without detection. These malicious clients craft their updates that, while aimed at undermining the global model, appear as normal within the range of local model updates from benign clients. This subtle manipulation makes it exceedingly challenge to detect and neutralize such threats within the FL system. Moreover, this inherent heterogeneity underscores why existing Byzantine-robust foundational aggregation rules, such as Trimmed-mean~\cite{yin2018byzantine} and Median~\cite{yin2018byzantine} (described further above), remain vulnerable to Byzantine attacks. This vulnerability has been illustrated in~\cite{fang2020local,shejwalkar2021manipulating}, highlighting the ongoing challenges in securing FL systems against sophisticated poisoning attacks.

To address the issue of data heterogeneity in FL and to enhance system robustness, our approach focuses on leveraging existing Byzantine-robust foundational aggregation rules, rather than developing new ones. The core concept of our proposed \alg framework involves a proactive step by the server: after receiving local model updates from clients during each global training round, the server generates additional synthetic model updates. These synthetic updates, when combined with the clients' local model updates, are then aggregated using established Byzantine-robust aggregation rules like Trimmed-mean or Median. The fundamental advantage of our proposed method is that by introducing synthetic model updates, we create a more homogeneous set of updates—where the augmented model updates (comprising both clients' local model updates and the synthetic updates) exhibit much lower variance compared to the original solely client-sourced updates. This reduction in variance across the updates makes the aggregated model less susceptible to outliers and potential poisoning attacks. By feeding these more uniform updates into proven robust aggregation mechanisms, we significantly enhance the system's ability to thwart malicious interventions, thereby increasing the overall security and reliability of the FL system. This strategy leverages the strengths of existing robust aggregation frameworks while effectively countering the challenges posed by data heterogeneity in federated environments.
In what follows, we demonstrate how to construct the synthetic updates.

In each global training round $ t $, assuming the server generates $ m $ synthetic updates, represented as $ \{\bar{\bm{g}}_1^t, \ldots, \bar{\bm{g}}_m^t \} $. 
The central challenge lies in determining these \( m \) updates effectively.
%
Remember that Trimmed-mean and Median are robust statistical methods designed to eliminate outliers (extreme values) from each dimension of clients' local model updates.
Motivated by this observation, the server first identifies a client's local model that deviates the most from the extreme updates, and augments clients' local model updates by incorporating multiple copies of this selected update.
Define $\bm{g}_{\text{max}}^t \in \mathbb{R}^d$ and $\bm{g}_{\text{min}}^t \in \mathbb{R}^d$ as the vectors representing the largest and smallest updates across all dimensions, respectively. Specifically, $\bm{g}_{\text{max}}^t[k]$ is the maximum value of the set $\{\bm{g}_1^t[k], \bm{g}_2^t[k], \ldots, \bm{g}_n^t[k]\}$, and $\bm{g}_{\text{min}}^t[k]$ is the minimum value of the same set for the dimension $k$. 
The values of $\bm{g}_{\text{max}}^t[k]$ and $\bm{g}_{\text{min}}^t[k]$ are determined as:
\begin{align}
\bm{g}_{\text{max}}^t[k] &= \max \{\bm{g}_1^t[k], \ldots, \bm{g}_n^t[k]\}, k \in [d] \\
\bm{g}_{\text{min}}^t[k] &= \min \{\bm{g}_1^t[k], \ldots, \bm{g}_n^t[k]\}, k \in [d].
\end{align}

Upon deriving $\bm{g}_{\text{max}}^t$ and $\bm{g}_{\text{min}}^t$, the server assigns a score $s_i^t$ to each client $i \in [n]$. This score quantifies how closely each client $i$'s local model update $\bm{g}_i^t$ aligns with the extreme updates, namely $\bm{g}_{\text{max}}^t$ and $\bm{g}_{\text{min}}^t$. A higher $s_i^t$ indicates a greater likelihood that the update $\bm{g}_i^t$ is malicious. In our proposed \alg, the server calculates $s_i^t$ by taking the lesser of the distances between $\bm{g}_i^t$ and the vectors $\bm{g}_{\text{max}}^t$ and $\bm{g}_{\text{min}}^t$, as shown below:
\begin{align}
\label{s_i_t}
s_i^t = \min\{ \|\bm{g}_i^t - \bm{g}_{\text{max}}^t\|, \|\bm{g}_i^t - \bm{g}_{\text{min}}^t\| \}.
\end{align}

The server then selects one local model update from the set $\{\bm{g}_1^t, \ldots, \bm{g}_n^t\}$ that exhibits the greatest deviation from the extreme updates $\bm{g}_{\text{max}}^t$ and $\bm{g}_{\text{min}}^t$. Let $i_* \in [n]$ be defined as the client with the largest score, such that $s_{i_*}^t \geq s_i^t$ for all $i \in [n]$. 
Formally, this selection criterion is defined as:
\begin{align}
\label{i_star}
i_* = \operatorname*{argmax}_{i \in [n]} s_i^t.
\end{align}

Following that, the server designates each synthetic update as $\bar{\bm{g}}_j^t=\bm{g}_{i_*}^t$ for all $j \in [m]$. It then supplements the clients' local model updates with these $m$ synthetic updates. To derive the global model update, the server applies Byzantine-robust foundational aggregation protocols. The resulting global model update, denoted by $\hat{\bm{g}}^t$, is computed as follows: if the Trimmed-mean protocol is employed for aggregation, each dimension $k \in [d]$ is updated according to:
\begin{align}
\label{trim_update}
\hat{\bm{g}}^t[k] &= \text{Trimmed-mean}\{\bm{g}_1^t[k], \ldots, \bm{g}_n^t[k], \bar{\bm{g}}_1^t[k],\ldots,\bar{\bm{g}}_m^t[k]\} \\
&  \text{s.t. }  \medspace \bar{\bm{g}}_1^t[k] = \cdots = \bar{\bm{g}}_m^t[k] = \bm{g}_{i_*}^t[k].   \nonumber
\end{align}

Similarly, if the Median protocol is used, the global model update for each dimension $k \in [d]$ is determined by:
\begin{align}
\label{median_update}
\hat{\bm{g}}^t[k] &= \text{Median}\{\bm{g}_1^t[k], \ldots, \bm{g}_n^t[k], \bar{\bm{g}}_1^t[k],\ldots,\bar{\bm{g}}_m^t[k]\} \\
&  \text{s.t. }  \medspace 
 \bar{\bm{g}}_1^t[k] = \cdots = \bar{\bm{g}}_m^t[k] = \bm{g}_{i_*}^t[k].   \nonumber
\end{align}

Finally, the server updates the global model with $\bm{\theta}^{t+1} = \bm{\theta}^{t} + \eta \cdot \hat{\bm{g}}^t$. It is important to note that the server does not update the global model directly using the selected local model update $\bm{g}_{i_*}^t$. Although $\bm{g}_{i_*}^t$ shows the greatest deviation from the extreme updates, it may still contain extreme values in certain dimensions. Thus, the server employs coordinate-wise robust aggregation methods like the Trimmed-mean or Median to mitigate the influence of outliers in each dimension.
Algorithm~\ref{training_procedure_our} shows the pseudocode of \alg framework.

\begin{algorithm}[t]
    \caption{Training procedure of \alg.}
    \label{training_procedure_our}
    \begin{algorithmic}[1]
    \renewcommand{\algorithmicrequire}{\textbf{Input:}}
    \renewcommand{\algorithmicensure}{\textbf{Output:}}
     \Require The $n$ clients with local training dataset; number of global training rounds $T$; number of synthetic updates $m$; learning rate $ \eta$; Byzantine-robust foundational aggregation rule $\text{Agg}$.
     \Ensure Global model $\bm{\theta}^T$. 
       \State Initialize $\bm{\theta}^0$.
        \For {$t = 0, 1, \cdots, T-1$}
         \State  // Step I: Global model synchronization.
        \State The server send the global model $\bm{\theta}^t$ to all clients.
         \State  // Step II: Local model updating.
        \For {each client $i\in [n]$ in parallel}
        \State Client $i$ fine-tunes its local model and sends the local model update $\bm{g}_i^t$ to the server. 
        \EndFor
        \State  // Step III: Global model updating.
        \State The server compute the score $s_i^t $ for $i \in [n]$ based on Eq.~(\ref{s_i_t}).
         \State The server chooses client \( i_* \) with the highest score as defined in Eq.~(\ref{i_star}), and generates each synthetic update as \( \bar{\bm{g}}_j^t = \bm{g}_{i_*}^t \) for all \( j \in [m] \).
         \State The server computes the global model update $\hat{\bm{g}}^t$ based on Eq.~(\ref{trim_update}) if using the Trimmed-mean aggregation rule, or according to Eq.~(\ref{median_update}) if using the Median aggregation rule.
         \State The server updates the global model as $\bm{\theta}^{t+1} = \bm{\theta}^{t} + \eta \cdot \hat{\bm{g}}^t$.
        \EndFor
    \end{algorithmic}
\end{algorithm}


\section{Theoretical Analysis} 
\label{theoretical_analysis}

In this section, we present a convergence analysis of our proposed \alg framework. In our theoretical proof, we assume that the server generates synthetic model updates using a clean dataset \( D_0 \). Both \( D_0 \) and the collective training data \( D = \bigcup_{i=1}^{n} D_{i} \) from clients are presumed to be drawn from the same distribution. This assumption is strictly for theoretical analysis purposes. As demonstrated in Section~\ref{our_method}, the server generates synthetic model updates based exclusively on the received model updates from clients. 
Define $Q$ as $Q=\max\{|D_0|, |D_1|,\cdots,|D_{n}|\}$.
Additionally, we adopt the standard assumptions prevalent in the FL literature~\cite{yin2018byzantine,ChenPOMACS17,chu2022securing,karimireddy2020byzantine}.

\begin{assumption}
\label{assumption_1}
The loss function $\mathcal{L}(\bm{\theta})$ is $\mu$-strongly convex. Let $\Theta$ represent the parameter space.
For any  $\bm{\theta}_1, \bm{\theta}_2 \in \Theta$, the following inequality holds:
\begin{gather*}
	L(\bm{\theta}_1) + \left\langle {\nabla L(\bm{\theta}_1),\bm{\theta}_2 - \bm{\theta}_1} \right\rangle + \frac{\mu}{2}{\left\| {\bm{\theta}_2 - \bm{\theta}_1} \right\|^2} \le
	L(\bm{\theta}_2). \nonumber
\end{gather*}
\end{assumption}

\begin{assumption}
\label{assumption_2}
The loss functions are $\lambda$-smooth. For any  $\bm{\theta}_1, \bm{\theta}_2 \in \Theta$, the following inequalities are satisfied:
\begin{gather*}
\left\|  \nabla L(\bm{\theta}_1) -  \nabla L(\bm{\theta}_2) \right\| \le \lambda  \left\| \bm{\theta}_1- \bm{\theta}_2  \right\|, \nonumber \\
\left\|  \nabla l(\bm{\theta}_1, D) -  \nabla l(\bm{\theta}_2, D) \right\| \le \lambda  \left\| \bm{\theta}_1- \bm{\theta}_2  \right\|.  \nonumber
\end{gather*}
\end{assumption}

\begin{assumption}
\label{assumption_3}
The diameter of the parameter space is limited. Specifically, for any $\bm{\theta}_1, \bm{\theta}_2 \in \Theta$, it can be stated that:
\begin{gather*}
 \left\|  \bm{\theta}_1 - \bm{\theta}_2 \right\| \le \varpi. \nonumber
\end{gather*}
\end{assumption}

\begin{assumption}
\label{assumption_4}
The expected squared norm of gradient is bounded by \(\zeta\), and the variance of gradient is bounded by \(\sigma^2\).
Specifically, for any \(\bm{\theta} \in \Theta\), the following inequalities hold:
\begin{gather*}
\mathbb{E} [ \left\|  \nabla l(\bm{\theta}, D)   \right\|^2 ]  \le  \zeta,  \nonumber \\
\mathbb{E} [  \left\|  \nabla l(\bm{\theta}, D)   -   \mathbb{E}[\nabla l(\bm{\theta}, D)] \right\|^2 ]   \le   \sigma^2.  \nonumber 
\end{gather*}
\end{assumption}

\begin{assumption}
\label{assumption_5}
For a given dimension \(k \in [d]\), let \(\partial_k l(\bm{\theta}, D)\) denote the partial derivative of \(l(\bm{\theta}, D)\) with respect to \(\bm{\theta}[k]\), where \(\bm{\theta}[k]\) represents the \(k\)-th dimension of the model parameter \(\bm{\theta}\). We assume that \(\partial_k l(\bm{\theta}, D)\) is \(\rho\)-sub-exponential for any \(k \in [d]\).
\end{assumption}

Based on the above assumptions, we present the theoretical results of our proposed \alg framework.

\begin{thm}
\label{Theorem1}
Assuming that Assumptions~\ref{assumption_1}-\ref{assumption_3} and Assumption~\ref{assumption_5} are valid and the client's learning rate is $\alpha=\frac{1}{\lambda}$, our proposed \alg framework uses the Trimmed-mean aggregation rule to combine both generated synthetic model updates and model updates from clients. 
Given $\upsilon > 0$, if $\beta=\frac{f}{n+m}$ and the trim parameter $c$ meet the criteria $\beta \le \frac{c}{n+m} \le \frac{1}{2} - \upsilon$, where $f$ is the number of malicious clients.
Then after $T$ rounds of global training, the probability of achieving the following result is at least $1-\frac{4d}{(1+ (n+m) \lambda \varpi Q)^d}$:
\begin{align}
 \left\| \bm{\theta}^T - \bm{\theta}^* \right\| \le 
  (1-\frac{\mu}{\mu+\lambda})^T   \left\| \bm{\theta}^0 - \bm{\theta}^* \right\| + \frac{2 B_1}{\mu}, \nonumber
  \end{align}
where $\bm{\theta}^T$ is the global model at training round $T$,  $\bm{\theta}^0$ is the initial global model, $\bm{\theta}^*$ is the optimal model under no attack, 
$B_1=\mathcal{O}((\frac{\rho c d}{\upsilon (n+m) \sqrt{Q}} + \frac{\rho d}{\upsilon \sqrt{(n+m) Q}}) \sqrt{ \log((n+m) \lambda \varpi Q)} ) $.
\end{thm}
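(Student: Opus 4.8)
The plan is to follow the classical Byzantine-robust convergence template of Yin et al.~\cite{yin2018byzantine}, adapted to the augmented set of $n+m$ updates (the $n$ client updates together with the $m$ synthetic copies of $\bm{g}_{i_*}^t$). The argument proceeds in three stages: (i) show that a single round of Trimmed-mean aggregation applied to the augmented set produces, with high probability, a global-update vector $\hat{\bm{g}}^t$ that is uniformly close (in $\ell_\infty$, hence in $\ell_2$ up to a $\sqrt{d}$ factor) to the true gradient $\nabla L(\bm{\theta}^t)$; (ii) translate this per-round gradient-estimation error into a contraction inequality on $\|\bm{\theta}^{t}-\bm{\theta}^*\|$ using strong convexity and smoothness; (iii) unroll the recursion over $T$ rounds to obtain the geometric term plus the steady-state error $\tfrac{2B_1}{\mu}$.

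For stage (i), I would fix a coordinate $k$ and condition on $\bm{\theta}^t$. Among the $n+m$ values in coordinate $k$, at most $f$ come from malicious clients, so by the assumption $\beta=\tfrac{f}{n+m}\le\tfrac{c}{n+m}\le\tfrac12-\upsilon$, after trimming the top and bottom $c$ entries every surviving value lies between the smallest and largest \emph{benign} (client-or-synthetic) values in that coordinate, and the trimmed mean is a convex combination of benign coordinate-values (a standard fact: each retained coordinate is sandwiched by honest ones even if it is itself adversarial). Each benign coordinate-value is, up to the smoothness-induced bias from taking one local SGD step with $\alpha=1/\lambda$, an unbiased estimate of $\partial_k L(\bm{\theta}^t)$ whose fluctuation is $\rho$-sub-exponential by Assumption~\ref{assumption_5} with effective sample size $\Theta(Q)$. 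A sub-exponential concentration (Bernstein) bound then controls the deviation of each benign coordinate-mean, and a union bound over the $d$ coordinates and over the relevant extreme order statistics yields the stated failure probability $\tfrac{4d}{(1+(n+m)\lambda\varpi Q)^d}$ and the per-coordinate error bound of order $\tfrac{\rho c d}{\upsilon(n+m)\sqrt Q}+\tfrac{\rho d}{\upsilon\sqrt{(n+m)Q}}$, which is exactly $B_1/\sqrt{\log(\cdot)}$ absorbed into $B_1$; the $\log((n+m)\lambda\varpi Q)$ factor is the price of the union bound. The key point making the synthetic updates harmless here is that they equal an actual client update $\bm{g}_{i_*}^t$, so their coordinate-values are themselves bracketed by the benign range in every coordinate where the majority is benign, and in any case they only add $m$ more points symmetrically trimmed by the same $c$-on-each-side rule.

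For stage (ii), with $\alpha=1/\lambda$ a single local step gives $\bm{g}_i^t \approx -\tfrac1\lambda\nabla L_i(\bm{\theta}^t)$, so $\hat{\bm{g}}^t = -\tfrac1\lambda\nabla L(\bm{\theta}^t) + \bm{e}^t$ with $\|\bm{e}^t\|\le B_1$ on the good event. Choosing the global learning rate $\eta$ appropriately (the statement's contraction factor $1-\tfrac{\mu}{\mu+\lambda}$ indicates $\eta$ is taken so that the update behaves like a gradient step on a $\mu$-strongly-convex $\lambda$-smooth function), the standard one-step descent lemma gives $\|\bm{\theta}^{t+1}-\bm{\theta}^*\| \le (1-\tfrac{\mu}{\mu+\lambda})\|\bm{\theta}^t-\bm{\theta}^*\| + \tfrac{2}{\mu}\|\bm{e}^t\|$ — I would derive this from $\langle\nabla L(\bm{\theta}^t),\bm{\theta}^t-\bm{\theta}^*\rangle \ge \tfrac{\mu\lambda}{\mu+\lambda}\|\bm{\theta}^t-\bm{\theta}^*\|^2 + \tfrac{1}{\mu+\lambda}\|\nabla L(\bm{\theta}^t)\|^2$ (co-coercivity for strongly convex smooth functions). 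Stage (iii) is then a routine geometric-series unrolling: $\sum_{s=0}^{T-1}(1-\tfrac{\mu}{\mu+\lambda})^s \le \tfrac{\mu+\lambda}{\mu}$, and combined with the $\tfrac2\mu B_1$ per-round contribution this telescopes to the claimed bound (the constant $2$ in $\tfrac{2B_1}{\mu}$ coming from bounding the geometric sum of the error terms).

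I expect the main obstacle to be stage (i): carefully accounting for the bias introduced by taking a single local SGD step (rather than using an exact gradient oracle), and making the sub-exponential concentration over order statistics of the trimmed set rigorous while tracking the exact dependence on $c$, $\upsilon$, $n+m$, and $Q$ so that the two-term structure of $B_1$ emerges with the right constants. The second subtlety is verifying that the good event on which all coordinate-wise bounds hold has probability exactly $1-\tfrac{4d}{(1+(n+m)\lambda\varpi Q)^d}$ — this requires choosing the deviation threshold in the Bernstein bound to be a specific multiple of $\sqrt{\log((n+m)\lambda\varpi Q)/Q}$ and checking the union bound arithmetic; everything after that is bookkeeping.
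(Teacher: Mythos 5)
Your stages (ii) and (iii) are exactly the paper's route: the paper proves a contraction lemma $\left\| \bm{\theta}^{t} - \alpha \nabla L(\bm{\theta}^{t}) - \bm{\theta}^* \right\| \le (1-\frac{\mu}{\mu+\lambda})\left\| \bm{\theta}^{t} - \bm{\theta}^* \right\|$ with $\alpha=\frac{1}{\lambda}$ via the same co-coercivity inequality you cite, then uses the triangle inequality to add a per-round error $\frac{B_1}{\lambda}$ and unrolls the recursion, with the factor $2/\mu$ coming from the geometric sum together with $\mu\le\lambda$, just as you anticipate. The statistical part, however, is where your proposal has a genuine gap. The paper does not attempt to analyze the algorithm's actual synthetic updates (the $m$ copies of $\bm{g}_{i_*}^t$): it states explicitly, at the start of the theoretical analysis section, that for the proof the synthetic updates are assumed to be computed from a clean dataset $D_0$ drawn from the same distribution as the clients' data, so that all $n+m$ updates other than the $f$ malicious ones are honest, independent sample-gradient averages; the coordinate-wise Trimmed-mean bound $\left\|\bm{g}(\bm{\theta}^t)-\nabla L(\bm{\theta}^t)\right\|\le B_1$ is then quoted directly from Theorem 11 of Yin et al.\ (the paper omits its proof).

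Your stage (i) instead tries to justify treating the copies of $\bm{g}_{i_*}^t$ as harmless by a bracketing argument ("they equal an actual client update, hence lie in the benign range"), and this does not work. First, $i_*$ may itself be malicious, in which case the bad points number $f+m$, not $f$, and the condition $\beta=\frac{f}{n+m}\le\frac{c}{n+m}$ no longer controls how many adversarial values survive trimming; the selection rule in Eq.~(\ref{i_star}) picks a centrally located update precisely so that it \emph{will} survive trimming, so you cannot argue it is filtered out. Second, even if $\bm{g}_{i_*}^t$ were benign, $m$ duplicates of a single client's update carry no fresh statistical information, so a concentration argument over the trimmed set cannot produce the second term of $B_1$, whose $\frac{1}{\sqrt{(n+m)Q}}$ scaling requires $(n+m)$ machines each contributing $\Theta(Q)$ independent samples. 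In short, to reach the stated $B_1$ and failure probability you must either adopt the paper's explicit idealization of the synthetic updates as honest updates on an i.i.d.\ clean dataset $D_0$ (and then stage (i) reduces to invoking Yin et al.'s Theorem 11 with $n+m$ machines and sample size $Q$), or supply a genuinely new argument about the copied updates that your sketch does not contain.
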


\begin{proof}
   The proof is relegated to Appendix~\ref{sec:appendix_1}.
\end{proof}

\begin{thm}
\label{Theorem2}
Under the assumptions that Assumptions~\ref{assumption_1}-\ref{assumption_4} hold true and the client's learning rate is set as $\alpha=\frac{1}{\lambda}$, our proposed framework, denoted as \alg, uses the Median aggregation rule to merge synthetic updates and updates contributed by clients. Assuming $\upsilon > 0$, if $\beta=\frac{f}{n+m}$ satisfies $\beta + \epsilon \le \frac{1}{2} - \upsilon$, then after $T$ rounds of global training, the probability of achieving the following outcome is guaranteed to be at least $1 - \frac{4d}{(1 + (n+m) \lambda \varpi Q)^d}$:
\begin{align}
 \left\| \bm{\theta}^T - \bm{\theta}^* \right\| \le 
  (1-\frac{\mu}{\mu+\lambda})^T   \left\| \bm{\theta}^0 - \bm{\theta}^* \right\| + \frac{2 B_2}{\mu}, \nonumber
  \end{align}

where $\epsilon=\frac{0.4748  \zeta}{\sqrt{Q}} + \sqrt{\frac{d\log(1+(n+m) \lambda \varpi Q)}{(n+m)(1-\beta)}} $,
$B_2=\frac{2\sqrt{2}}{(n+m) Q} + \frac{2 \sqrt{\pi} \sigma (\beta + \epsilon) \exp(\frac{1}{2}(\Phi^{-1}(1-\upsilon))^2)}{\sqrt{Q}}$, and $\Phi$ represents the cumulative distribution function of the standard Gaussian distribution.

\end{thm}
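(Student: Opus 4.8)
The plan is to reduce Theorem~\ref{Theorem2} to a single high-probability statement --- that the Median output $\hat{\bm{g}}^t$ is, uniformly over $\bm{\theta}^t \in \Theta$, an approximate scaled gradient, $\bigl\| \hat{\bm{g}}^t + \tfrac{1}{\lambda}\nabla L(\bm{\theta}^t) \bigr\| \le \Delta$ for an explicit $\Delta$ that is a constant multiple of $B_2$ --- and then to run the textbook strongly-convex/smooth descent recursion with this inexact gradient and unroll it over $T$ rounds. The first move is to rewrite the honest updates in closed form: with client learning rate $\alpha = \tfrac{1}{\lambda}$ and one local gradient step, each benign client update is $\bm{g}_i^t = -\tfrac{1}{\lambda}\nabla l(\bm{\theta}^t, D_i)$ and each synthetic update is $\bar{\bm{g}}_j^t = -\tfrac{1}{\lambda}\nabla l(\bm{\theta}^t, D_0)$, so aggregation amounts to a coordinate-wise median over $n+m$ vectors of which at least $n+m-f$ are honest sample-gradient estimates (the $m$ synthetic ones being identical copies built from $D_0$, which shares the population distribution with $D$) and at most $f$ are adversarial and arbitrary. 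The quantity driving robustness is therefore $\beta = f/(n+m)$, strictly smaller than $f/n$; the gain from injecting $m$ synthetic updates is exactly this reduced corruption fraction, and since the $m$ copies are clustered near the honest mean they only help the counting argument below, so no independence among those copies is needed.

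The technical core is a uniform concentration lemma for the coordinate-wise median, adapting the Median analysis of~\cite{yin2018byzantine} to our counts. Fix a coordinate $k$ and a target accuracy $\delta > 0$, and for fixed $\bm{\theta}$ let $P^{+}_k(\bm{\theta},\delta)$ be the probability that an honest coordinate $\partial_k l(\bm{\theta}, D)$ exceeds $\partial_k L(\bm{\theta}) + \delta$. By Assumption~\ref{assumption_4} the honest coordinate has mean $\partial_k L(\bm{\theta})$, variance at most $\sigma^2$, and a third-moment proxy controlled through $\zeta$; averaging over $|D_i| \le Q$ i.i.d.\ samples and applying a Berry--Esseen bound shows that $P^{+}_k(\bm{\theta},0)$ lies in $\tfrac12 \pm \tfrac{0.4748\,\zeta}{\sqrt{Q}}$ --- this is where the Berry--Esseen constant $0.4748$ enters --- and that $P^{+}_k(\bm{\theta},\delta) \le \tfrac12 - \upsilon$ once $\delta$ is at least the value at which the honest (near-Gaussian) CDF crosses level $1-\upsilon$; inverting that via a Gaussian comparison gives $\delta = \mathcal{O}\bigl(\tfrac{\sigma(\beta+\epsilon)}{\sqrt{Q}}\exp(\tfrac12(\Phi^{-1}(1-\upsilon))^2)\bigr)$, the factor $\exp(\tfrac12(\Phi^{-1}(1-\upsilon))^2)$ and the $\sqrt{\pi}$ coming from the derivative of $\Phi^{-1}$ at $1-\upsilon$. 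A counting argument then finishes the pointwise bound: if the empirical median of the $n+m$ values were more than $\delta$ above $\partial_k L(\bm{\theta})$, at least $(\tfrac12-\beta)(n+m)$ honest values would exceed $\partial_k L(\bm{\theta})+\delta$, an event whose probability is exponentially small in $(n+m)$ by a binomial tail bound whenever $\beta+\epsilon \le \tfrac12-\upsilon$. To make this hold uniformly over $\bm{\theta}\in\Theta$, I would cover $\Theta$ (diameter $\varpi$ by Assumption~\ref{assumption_3}) by a net of radius $\sim \tfrac{1}{(n+m)\lambda Q}$, whose cardinality is $\lesssim (1+(n+m)\lambda\varpi Q)^d$; by $\lambda$-Lipschitzness of the gradients (Assumption~\ref{assumption_2}) the median deviation changes negligibly between net points, and a union bound over the net, the $d$ coordinates, and the two tails produces exactly the stated failure probability $\tfrac{4d}{(1+(n+m)\lambda\varpi Q)^d}$. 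Summing the per-coordinate accuracy over $k$ and adding the $\tfrac{2\sqrt{2}}{(n+m)Q}$ discretization term from the net yields $\bigl\|\hat{\bm{g}}^t + \tfrac{1}{\lambda}\nabla L(\bm{\theta}^t)\bigr\| \le \Delta$ with $\Delta$ proportional to $B_2$.

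With this inexact-gradient guarantee in hand, convergence follows from a standard argument. Writing $\bm{\theta}^{t+1} = \bm{\theta}^t + \eta\hat{\bm{g}}^t = \bm{\theta}^t - \tfrac{\eta}{\lambda}\nabla L(\bm{\theta}^t) + \eta\bigl(\hat{\bm{g}}^t + \tfrac{1}{\lambda}\nabla L(\bm{\theta}^t)\bigr)$, I would use $\mu$-strong convexity and $\lambda$-smoothness (Assumptions~\ref{assumption_1}--\ref{assumption_2}) to get the exact-gradient contraction $\bigl\|\bm{\theta}^t - \tfrac{\eta}{\lambda}\nabla L(\bm{\theta}^t) - \bm{\theta}^*\bigr\| \le (1-\tfrac{\mu}{\mu+\lambda})\|\bm{\theta}^t-\bm{\theta}^*\|$ for the appropriately chosen server step $\eta$, apply the triangle inequality with $\bigl\|\eta(\hat{\bm{g}}^t+\tfrac{1}{\lambda}\nabla L(\bm{\theta}^t))\bigr\| \le \eta\Delta$, and unroll the geometric recursion using $\sum_{i\ge 0}(1-\tfrac{\mu}{\mu+\lambda})^i = \tfrac{\mu+\lambda}{\mu}$ to land on $\|\bm{\theta}^T-\bm{\theta}^*\| \le (1-\tfrac{\mu}{\mu+\lambda})^T\|\bm{\theta}^0-\bm{\theta}^*\| + \tfrac{2B_2}{\mu}$, all on the high-probability event above. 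The main obstacle is the middle paragraph: obtaining the median-concentration bound with the \emph{exact} constants claimed --- producing the Berry--Esseen term $\tfrac{0.4748\zeta}{\sqrt{Q}}$, extracting the Gaussian-inversion factor $\exp(\tfrac12(\Phi^{-1}(1-\upsilon))^2)$, and making the covering-number bookkeeping reproduce $\tfrac{4d}{(1+(n+m)\lambda\varpi Q)^d}$ --- whereas the closed-form rewriting of the honest updates and the descent recursion are routine. This mirrors the Trimmed-mean analysis behind Theorem~\ref{Theorem1}, with the sub-exponential tail bound there replaced here by the median counting and CDF-inversion argument.
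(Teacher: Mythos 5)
Your proposal follows essentially the same route as the paper: a median-concentration bound $\|\hat{\bm{g}}^t - \nabla L(\bm{\theta}^t)/\lambda\text{-scaled gradient}\| \le B_2$ (the paper's Lemma~\ref{median_lem}, whose proof it delegates to Theorem~8 of~\cite{yin2018byzantine}), combined with the one-step strongly-convex contraction of Lemma~\ref{general_lem} via the triangle inequality and a geometric unrolling to reach $\frac{2B_2}{\mu}$. The only difference is that you sketch the Berry--Esseen/covering-net argument behind the concentration lemma explicitly, which the paper omits by citation, so the approaches coincide.
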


\begin{proof}
   The proof is relegated to Appendix~\ref{sec:appendix_2}.
\end{proof}

\begin{remark} 
In our theoretical analysis, we adopt simplifying assumptions commonly used in the FL community~\cite{yin2018byzantine,ChenPOMACS17,chu2022securing,karimireddy2020byzantine}.
However, we acknowledge that these assumptions may not fully capture real-world conditions. To assess \alg's sensitivity, we conduct additional experiments by relaxing certain assumptions, such as Assumption~\ref{assumption_1}, which pertains to the non-convexity of deep neural networks. Specifically, we test \alg on two CNN architectures and the more complex ResNet-18 model~\cite{he2016deep}, neither of which satisfies Assumption~\ref{assumption_1}. Extensive experimental results show that our proposed \alg remains secure even when certain assumptions are partially relaxed.
\end{remark}


\section{Evaluation} \label{sec:exp}

\subsection{Experimental Setup}

\subsubsection{Datasets} In our experiments, we use six distinct datasets from various domains, which encompass MNIST~\cite{lecun2010mnist}, Fashion-MNIST~\cite{xiao2017online}, Human Activity Recognition (HAR)~\cite{anguita2013public}, Purchase~\cite{PurchaseDataset}, Large-scale CelebFaces Attributes (CelebA)~\cite{liu2015faceattributes}, and CIFAR-10~\cite{krizhevsky2009learning}.

\myparatight{a) MNIST~\cite{lecun2010mnist}}The MNIST dataset consists of 60,000 training images and 10,000 testing images, encompassing a total of 10 unique classes.

\myparatight{b) Fashion-MNIST~\cite{xiao2017online}}The Fashion-MNIST dataset comprises a total of 70,000 fashion images. The training dataset contains 60,000 images, and the testing set contains 10,000 images. Each image in Fashion-MNIST is assigned to one of 10 classes.

\myparatight{c) Human Activity Recognition (HAR)~\cite{anguita2013public}}The HAR dataset is a practical dataset used for predicting human activities. It includes data collected from 30 users who used smartphones in their daily routines, totaling 10,299 instances. Each instance consists of 561 features and is classified into one of six distinct categories. Following the approach in~\cite{cao2020fltrust}, in our experiments, we randomly assign 75\% of the data from each user for training, while the remaining 25\% is kept aside for testing.

\myparatight{d) Purchase~\cite{PurchaseDataset}}The purchase classification dataset is imbalanced, containing 197,324 examples, each characterized by 600 binary attributes distributed among 100 different categories. In our experiments, we randomly choose a subset of 150,000 examples for training our models, reserving the remaining 47,324 examples for testing purposes.


\myparatight{e) Large-scale CelebFaces Attributes (CelebA)~\cite{liu2015faceattributes}}This dataset involves a binary classification task to determine whether the person in an image is smiling or not. The CelebA dataset includes 177,480 training examples and 22,808 testing examples.



\myparatight{f) CIFAR-10~\cite{krizhevsky2009learning}}CIFAR-10 is a color image classification dataset with 10 distinct classes. It includes a total of 60,000 images, with 50,000 used for training and the remaining 10,000 for testing.


\subsubsection{Poisoning Attacks} 
%
By default, our experiments examine six untargeted attacks
(label flipping attack~\cite{tolpegin2020data}, Gaussian attack~\cite{blanchard2017machine}, Trim attack~\cite{fang2020local}, Krum attack~\cite{fang2020local}, Min-Max attack~\cite{shejwalkar2021manipulating}, Min-Sum attack~\cite{shejwalkar2021manipulating}) and one targeted attack (Scaling attack~\cite{bagdasaryan2020backdoor,cao2020fltrust}).
Note that we also consider 
two additional targeted attacks
and three more sophisticated attacks in Section~\ref{sec:discussion_limitation}.
By evaluating our method with a total of 12 representative poisoning attacks, covering both untargeted and targeted strategies, we ensure a comprehensive evaluation that reflects real-world scenarios and challenges, rigorously testing our method’s robustness across diverse attack models. These attacks are selected because they are widely studied in the literature~\cite{fang2020local,shejwalkar2021manipulating,tolpegin2020data,cao2020fltrust,nguyen2022flame,shejwalkar2022back} and represent a range of commonly observed methods posing substantial threats to FL systems.

\myparatight{a) Label flipping (LF) attack~\cite{tolpegin2020data}}In the LF attack, the attacker modifies the labels of training data on malicious clients. Specifically, for a training example originally labeled as $y$, the attacker changes it to $M - y - 1$, where $M$ represents the total number of labels.

\myparatight{b) Gaussian attack ~\cite{blanchard2017machine}}In this specific attack, each malicious client sends a randomly generated vector to the server. These vectors are sampled from a Gaussian distribution with a mean of 0 and a variance of 200.

\myparatight{c) Trim attack~\cite{fang2020local}}The Trim attack is a strategy targeting aggregation Trimmed-mean and Median aggregation rules. 
In this attack, the attacker carefully designs the model updates on malicious clients to ensure that the post-attack model update diverges significantly from its pre-attack one.

\myparatight{d) Krum attack~\cite{fang2020local}}The Krum attack is an advanced strategy aimed at exploiting the Krum aggregation rule. The attacker strategically crafts the model updates on malicious clients to influence the Krum rule into choosing the malicious update as the final aggregated outcome.

\myparatight{e) Min-Max attack~\cite{shejwalkar2021manipulating}}In the Min-Max attack, the attacker tailors local model updates on malicious clients to achieve their objectives, ensuring that these updates closely resemble benign updates. Specifically, the maximum distance between a malicious local model update and any benign local model update is smaller than the maximum distance between any two benign local model updates.

\myparatight{f) Min-Sum attack~\cite{shejwalkar2021manipulating}}The Min-Sum attack is another attack model that is agnostic to aggregation rules. Unlike the Min-Max attack, in the Min-Sum attack, the attacker ensures that the sum of distances between a malicious local model update and all benign local model updates does not exceed the maximum sum of distances between any two benign updates.

\myparatight{g) Scaling attack~\cite{bagdasaryan2020backdoor,cao2020fltrust}}The Scaling attack is a type of targeted attack where the attacker initially augments the local training data of malicious clients by introducing backdoor triggers into duplicated data copies. Subsequently, these malicious clients train their local models using the augmented training data and further amplify their local model updates before transmitting them to the server.

\subsubsection{Compared Aggregation Rules} 
By default, we compare our proposed \alg with the following seven aggregation rules.

\myparatight{a) FedAvg~\cite{mcmahan2017communication}}FedAvg is a non-robust aggregation method where the server aggregates received local model updates by computing the average of all updates.


\myparatight{b) Trimmed-mean (Trim-mean)~\cite{yin2018byzantine}}Trimmed-mean is an aggregation method applied per coordinate. For each dimension, the server removes the largest $c$ and smallest $c$ elements, then computes the average of the remaining values. Here, $c$ is referred to as the trim parameter.

\myparatight{c) GAS + Trim-mean~\cite{liu2023byzantine}}Upon receiving the local model updates from all clients, the server splits each update into multiple parts. The server then applies the Trimmed-mean~\cite{yin2018byzantine} aggregation rule to compute the aggregated result for each part. Subsequently, the server calculates an identification score for each client and selects the $n - f$ local model updates with the lowest identification scores for aggregation by taking the average of these $n - f$ updates, where $f$ represents the total number of malicious clients.

\myparatight{d) Gaussian + Trim-mean}Upon receiving \( n \) local model updates from clients, the server creates $m$ synthetic updates. Each \( k \)-th dimension of these updates follows a Gaussian distribution \( \mathcal{N}(\mu,\, \sigma^2) \), where \( \mu \) and \( \sigma \) are the mean and standard deviation of \( \{\bm{g}_i^t[k] : i \in [n]\} \) with \( \bm{g}_i^t[k] \) representing the \( k \)-th dimension of \( \bm{g}_i^t \), and $k \in [d]$. Subsequently, the server uses the Trimmed-mean~\cite{yin2018byzantine} aggregation rule to merge these $m$ synthetic updates with the \( n \) received updates.

\myparatight{e) Median~\cite{yin2018byzantine}}The Median is another aggregation rule that is applied on a per-coordinate basis. For each dimension, the server calculates the median value from all the received model updates.



\myparatight{f) GAS + Median~\cite{liu2023byzantine}}Similar to the GAS + Trim-mean aggregation rule, the server in the GAS + Median rule also divides each local model update into multiple parts. However, in this method, the server applies the Median~\cite{yin2018byzantine} aggregation rule to each part. Then, the server calculates the final aggregated update by averaging the $n - f$ local model updates with the lowest identification scores.

\myparatight{g) Gaussian + Median}For this method, the server follows the same procedure as in the Gaussian + Trim-mean method to produce the $m$ synthetic updates. The only distinction lies in the aggregation technique; here, the server employs the Median~\cite{yin2018byzantine} aggregation rule to aggregate the $m$ synthetic updates with the \( n \) local model updates from clients.

\subsubsection{Evaluation Metrics} Two evaluation metrics are explored in this paper.

\myparatight{a) Testing error rate}The testing error rate reflects the percentage of test instances incorrectly classified by the global model. A lower testing error rate signifies a stronger defense.

\myparatight{b) Attack success rate}The attack success rate is determined by the proportion of targeted examples predicted as the labels chosen by the attacker. A lower attack success rate indicates more effective defense.

\subsubsection{Non-IID Setting}
In FL, a distinctive aspect is that clients' local training data are not independently and identically distributed (Non-IID). In our study, we adopt the following method to simulate the Non-IID setting as described in~\cite{fang2020local}. For a dataset with \( M \) classes, clients are initially randomly grouped into \( M \) clusters. A training example labeled \( y \) is then assigned to clients in cluster \( y \) with probability \( h \), and to other clusters with probability \( \frac{1-h}{M-1} \). A higher value of \( h \) indicates greater Non-IID characteristics in the clients' training data. 
For the MNIST and Fashion-MNIST datasets, we set \( h = 0.5 \).
It is important to note that we do not simulate the Non-IID setting for the HAR, Purchase, and CelebA datasets, as these datasets inherently exhibit heterogeneity.

\begin{table*}[htbp]
  \centering
  \addtolength{\tabcolsep}{-1.985pt}
  \caption{Results of different FL methods on MNIST, Fashion-MNIST, HAR, and Purchase datasets. The results of Scaling attack are shown as ``testing error rate / attack success rate''.}
        \vspace{-0.06in}
   \subfloat[MNIST dataset.]
   {
    \begin{tabular}{l|cccccccc}
    \hline
    Aggregation rule & No attack   & LF attack    & Gaussian attack & Trim attack & Krum attack & Min-Max attack & Min-Sum attack & \multicolumn{1}{c}{Scaling attack} \\
     \hline
    FedAvg  &   0.05    &  0.07     &   0.90    &   0.32    &  0.10     &  0.90     &   0.90    &  0.64 / 0.70 \\
    \hline
    Trim-mean &  0.06     &  0.06     &  0.06     & 0.27      &   0.08    &  0.19     &   0.13    & 0.13 / 0.02 \\
    GAS + Trim-mean &  0.05  &  0.05   &  0.11   &   0.29   & 0.07    & 0.10   &  0.11    &  0.43 / 0.47 \\
    Gaussian + Trim-mean & 0.05      &  0.11     &  0.91     &    0.91   &   0.05    &   0.08    &  0.06     & 0.91 / 1.00 \\
    \rowcolor{greyL}
    \alg + Trim-mean &   0.05    &  0.05     & 0.05      &   0.05    &  0.05     &   0.05    &   0.05    & 0.05 / 0.02 \\
     \hline
    Median &   0.05    &   0.09    & 0.16      &   0.23    &   0.17    &   0.19    &   0.23    & 0.05 / 0.02 \\
    GAS + Median &   0.05    &   0.05    &  0.12     &  0.26     &   0.06    &  0.10     &  0.10     & 0.59 / 0.65 \\
    Gaussian + Median &   0.05    &   0.90    &    0.90     & 0.90    &  0.05     &  0.14     &   0.14    &  0.91 / 1.00 \\
    \rowcolor{greyL}
    \alg + Median &  0.05    &    0.05     &   0.05  &  0.05    &  0.07     &   0.05     &   0.05     & 0.06 / 0.02 \\
     \hline
    \end{tabular}%
    }
   \\
        \vspace{-0.06in}
       \subfloat[Fashion-MNIST dataset.]
   {
    \begin{tabular}{l|cccccccc}
    \hline
    Aggregation rule & No attack   & LF attack    & Gaussian attack & Trim attack & Krum attack & Min-Max attack & Min-Sum attack & \multicolumn{1}{c}{Scaling attack} \\
     \hline
    FedAvg  &   0.18    &   0.21    &  0.21     &  0.38     &  0.21     &   0.30    & 0.32      &  0.63 / 0.68 \\
    \hline
    Trim-mean &   0.21   & 0.29      & 0.26  &  0.47     & 0.32      &   0.24   &   0.23    & 0.26 / 0.01 \\
    GAS + Trim-mean & 0.21   &  0.21   & 0.24    &  0.50      &    0.26   & 0.26   &  0.23   & 0.67 / 0.62 \\
    Gaussian + Trim-mean &  0.18   &   0.90    &   0.90    &   0.90      &   0.20    &   0.32    &    0.90   & 0.90 / 1.00 \\
       \rowcolor{greyL}
    \alg + Trim-mean &  0.18     & 0.18      &  0.18     &  0.19     &  0.20     &  0.19      &  0.18     & 0.22 / 0.03 \\
     \hline
    Median &   0.23   &  0.27     &  0.27     &   0.35    &  0.29     &  0.31     &  0.29     & 0.29 / 0.03 \\
    GAS + Median &   0.20    &  0.20     &  0.24     &  0.44     & 0.25      &  0.26     &   0.23    & 0.60 / 0.64 \\
    Gaussian + Median &  0.23     &  0.90     &    0.35      &   0.90       &  0.90        &   0.90     &   0.90     & 0.90 / 1.00 \\
     \rowcolor{greyL}
    \alg + Median &   0.18   &  0.20     &  0.18   &  0.18      &  0.21     &  0.19     & 0.20      &  0.23 / 0.03 \\
     \hline
    \end{tabular}%
    }
    \\
        \vspace{-0.06in}
       \subfloat[HAR dataset.]
   {
    \begin{tabular}{l|cccccccc}
    \hline
    Aggregation rule & No attack   & LF attack    & Gaussian attack & Trim attack & Krum attack & Min-Max attack & Min-Sum attack & \multicolumn{1}{c}{Scaling attack} \\
     \hline
    FedAvg  &   0.05    & 0.12  &  0.13 & 0.38  & 0.06  &  0.14   & 0.17      & 0.05 / 0.91 \\
    \hline
    Trim-mean &   0.07    & 0.07  & 0.07  &  0.26  & 0.09   & 0.16   &  0.16     &  0.07 / 0.02  \\
    GAS + Trim-mean & 0.06   &  0.09 & 0.12    &  0.35   &  0.06    &  0.12     &   0.15  & 0.07 / 0.92 \\
    Gaussian + Trim-mean &  0.05     & 0.12   &  0.41   &0.24 & 0.06   &  0.17 &  0.17  & 0.62 / 0.01 \\
    \rowcolor{greyL}
    \alg + Trim-mean &  0.05   &  0.08 & 0.05   & 0.08   &  0.06 &   0.09    & 0.09 & 0.05 / 0.01 \\
     \hline
    Median & 0.07   & 0.08  &  0.09  &  0.16  & 0.08  &  0.17 &   0.15 & 0.07 / 0.02 \\
    GAS + Median &  0.07  & 0.10   & 0.11   & 0.41  & 0.06 &  0.16 &    0.17   & 0.07 / 0.88 \\
    Gaussian + Median & 0.06   & 0.14   &   0.11  & 0.30  & 0.06  & 0.15      & 0.15   & 0.08 / 0.10 \\
    \rowcolor{greyL}
    \alg + Median &  0.05   &  0.09   & 0.06  & 0.09 &  0.06  & 0.09  &  0.09   & 0.05 / 0.02 \\
     \hline
    \end{tabular}%
    }
    \\
        \vspace{-0.06in}
       \subfloat[Purchase Dataset.]
   {
    \begin{tabular}{l|cccccccc}
    \hline
    Aggregation rule & No attack   & LF attack    & Gaussian attack & Trim attack & Krum attack & Min-Max attack & Min-Sum attack & \multicolumn{1}{c}{Scaling attack} \\
     \hline
    FedAvg  &  0.17     &  0.32     &  0.72     &  0.60     &  0.20     &  0.34     &  0.34     &  0.99 / 0.41 \\
    \hline
    Trim-mean & 0.21      &  0.28     &  0.21      &  0.47    &  0.26     &   0.40    &   0.40    &  0.25 / 0.02 \\
    GAS + Trim-mean &    0.18    &   0.30    &   0.79     &   0.70    &  0.21     &  0.37     & 0.35      & 0.23 / 0.01 \\
    Gaussian + Trim-mean & 0.17      & 0.20  &  0.96 &  0.58     & 0.17   &    0.38   &   0.38    &  0.99 / 1.00 \\
    \rowcolor{greyL}
    \alg + Trim-mean & 0.17   & 0.20     & 0.17   &  0.21    & 0.19      & 0.21      &  0.22     & 0.18 / 0.02 \\
     \hline
    Median &  0.23     &  0.32 &  0.25 & 0.48 &  0.30   & 0.49  &  0.49  & 0.26 / 0.03 \\
    GAS + Median &  0.17  &   0.30  &  0.70 &  0.72 &  0.21 &  0.37 &   0.37  &  0.20 / 0.02\\
    Gaussian + Median &0.18 & 0.22 & 0.58 & 0.58  &  0.18  & 0.29  &  0.29  &  0.99 / 1.00 \\
    \rowcolor{greyL}
    \alg + Median &   0.18    &  0.23   &  0.19  &   0.24    &   0.24   & 0.24     &  0.24  & 0.21 / 0.03 \\
     \hline
    \end{tabular}%
    }
     \label{result_all_datasets}%
\end{table*}%

\subsubsection{Parameter Settings} 
We consider 100 clients each for the MNIST, Fashion-MNIST, and 
CIFAR-10 datasets ($n=100$), 40 clients for the Purchase dataset, 20 clients for the CelebA dataset, and 30 clients in total for the HAR dataset, where each real-world user is treated as a client. By default, we assume 20\% of the clients are malicious. 
For the MNIST, Fashion-MNIST, and CelebA datasets, we train a convolutional neural network (CNN) whose architecture is detailed in Table~\ref{cnn_arch} in Appendix. 
The HAR dataset is trained using a logistic regression classifier. The Purchase dataset employs a fully connected neural network as the global model architecture with one hidden layer consisting of 1,024 neurons and Tanh activation function.
We train a ResNet-18~\cite{he2016deep} model for the CIFAR-10 dataset.
We conduct training for 2,000 rounds on the MNIST dataset, 3,000 rounds on Fashion-MNIST, 1,000 rounds each on the HAR and Purchase datasets, 1,000 rounds on CelebA, 
and 1,000 rounds on CIFAR-10.
The batch sizes are set to 32 for MNIST and Fashion-MNIST, 32 for HAR, 128 for Purchase, 20 for CelebA, 
and 40 for CIFAR-10.
The corresponding learning rates are 1/3,200 for MNIST, Fashion-MNIST, and HAR, 1/1,280 for Purchase, 1/20,000 for CelebA,
and 0.005 for CIFAR-10 dataset.
Following~\cite{blanchard2017machine,yin2018byzantine,liu2023byzantine}, we set $c=f$ by default.
In the GAS approach, we use the parameter as recommended in~\cite{liu2023byzantine}.
Unless stated otherwise, we assume that all clients participate in the training process in every round.
In the \alg framework we propose, the server generates $\frac{n}{2}$ synthetic updates in each training round by default for MNIST, Fashion-MNIST, HAR, Purchase, and CelebA.
The results are primarily reported using the MNIST dataset by default.

\subsection{Experimental Results}

\myparatight{Our \alg is effective}Table~\ref{result_all_datasets} displays the performance of various FL methods under different poisoning attacks on the MNIST, Fashion-MNIST, HAR, and Purchase datasets. The results for the CelebA and CIFAR-10 datasets are presented in Table~\ref{result_all_datasets_CelebA_cifar10} in Appendix.
The term ``No attack'' indicates that all client in the FL system are benign without any malicious clients. The results of the Scaling attack are shown in the form of ``testing error rate / attack success rate''. The terms ``\alg + Trim-mean'' and ``\alg + Median'' describe our method where the server combines clients' local model updates and synthetic updates using Trimmed-mean and Median aggregation rules, respectively. Notably, under benign conditions, our \alg framework mirrors the performance of FedAvg. For instance, both ``\alg + Trim-mean'' and ``\alg + Median'' achieve similar testing error rates as FedAvg on the MNIST, Fashion-MNIST, and HAR datasets when no malicious clients are present. However, in the presence of malicious clients attempting to compromise the system, our \alg framework is uniquely capable of defending against such attacks. For example, on the MNIST dataset, our approach under attack performs comparably to FedAvg in a non-attack scenario. Nonetheless, existing Byzantine-robust foundational aggregation rules like Trim-mean and Median show inherent weaknesses to poisoning attacks; for example, the testing error rate for Trim-mean on the Fashion-MNIST dataset escalates from 0.21 under no attack to 0.47 under the Trim attack. Similarly, more complex aggregation schemes such as ``GAS + Trim-mean'' and ``GAS + Median'' are also susceptible. 
Even on a complex dataset and with a complicated architecture like the ResNet-18 model for the CIFAR-10 image classification task, our proposed \alg can protect FL against poisoning attacks. In contrast, other aggregation rules, such as Trim-mean and Median, show higher testing error rates of 0.79 and 0.84, respectively, under the Trim attack (see Table~\ref{result_all_datasets_cifar10} in Appendix).
Our framework not only combats untargeted attacks but also protects effectively against targeted threats like the Scaling attack, as evidenced in Table~\ref{result_all_datasets} and Table~\ref{result_all_datasets_CelebA_cifar10}.
Note that in our proposed \alg, the server produces synthetic updates after collecting local model updates from clients. Using established Byzantine-robust aggregation rules, the server then combines these augmented model updates to reduce update variance. This is further supported by our experimental findings. For instance, under the Trim attack on the MNIST dataset, the mean variances for the Median and ``\alg + Median'' approaches are 3.39 and 0.41, respectively. This indicates that \alg effectively reduces the variance in updates.

\begin{table*}[!t]
  \centering
  \addtolength{\tabcolsep}{-1.985pt}
  \caption{Results when the server uses either the synthetic update alone or FedAvg to merge the augmented model updates.}
    \begin{tabular}{l|cccccccc}
    \hline
    Aggregation rule & No attack   & LF attack    & Gaussian attack & Trim attack & Krum attack & Min-Max attack & Min-Sum attack & \multicolumn{1}{c}{Scaling attack} \\
     \hline
    Synthetic only  & 0.08  &  0.12  &  0.08  &  0.12 & 0.91  &   0.12  & 0.08  & 0.91 / 1.00 \\
    \alg + FedAvg   & 0.05 & 0.14  & 0.87  &  0.12 &0.06  &  0.07 & 0.08 & 0.46 / 0.59 \\
     \hline
    \end{tabular}%
     \label{result_table_synthetic_only}
\end{table*}%

\begin{table*}[!t]
  \centering
    \addtolength{\tabcolsep}{-0.285pt}
  \caption{Results of complex Byzantine-robust aggregation rules.}
          \vspace{-0.06in}
    \begin{tabular}{l|cccccccc}
    \hline
    Aggregation rule & No attack   & LF attack    & Gaussian attack & Trim attack & Krum attack & Min-Max attack & Min-Sum attack & \multicolumn{1}{c}{Scaling attack} \\
     \hline
     Krum  & 0.10 & 0.10  &  0.10 & 0.10  & 0.90 & 0.11  & 0.11 & 0.10 / 0.01\\
     FoolsGold  & 0.09 & 0.12  & 0.09  & 0.37 & 0.12& 0.25  & 0.22 &  0.13 / 0.05 \\
     FLAME  & 0.07 & 0.08  & 0.08  & 0.17 & 0.08& 0.07  & 0.07 & 0.08 / 0.02 \\
     \hline
    \end{tabular}%
     \label{result_complex}
\end{table*}%

It is important to highlight that in our proposed framework, the server employs robust, coordinate-wise aggregation methods like Trimmed-mean or Median, aiming to minimize outlier effects in each dimension. That is to say, the server does not directly use the selected local model update (one synthetic update) to update the global model. 
``Synthetic only" in Table~\ref{result_table_synthetic_only} illustrates the performance when the global model is updated exclusively with the selected local
model update. 
This approach, as shown in Table~\ref{result_table_synthetic_only}, is particularly susceptible to Krum and Scaling attacks, with testing error rates soaring from 0.08 under normal conditions to 0.91 during these attacks. 
%
%
``\alg + FedAvg'' in Table~\ref{result_table_synthetic_only} refers to the method in which the server applies the FedAvg rule to merge synthetic updates with clients' model updates. We observe that this method is susceptible to Gaussian attack, as the synthetic updates can include extreme values in some dimensions. Therefore, we require robust aggregation rules to filter out these extreme values.

In recent years, a variety of new and sophisticated robust aggregation rules have been introduced. Table~\ref{result_complex} displays the testing error rate and attack success rate for three complex and representative Byzantine-robust aggregation rules on the MNIST dataset: Krum~\cite{blanchard2017machine}, FoolsGold~\cite{fung2018mitigating}, and FLAME~\cite{nguyen2022flame}. As observed in Table~\ref{result_complex}, these advanced robust aggregation rules remain susceptible to poisoning attacks. For example, the Krum aggregation rule is fundamentally vulnerable to the Krum attack, and the FLAME method is susceptible to the Trim attack.
Comparing Table~\ref{result_all_datasets} with Table~\ref{result_complex}, it is evident that our proposed \alg framework demonstrates greater robustness compared to these sophisticated aggregation rules.

\begin{figure*}[!t]
	\centering
	\includegraphics[scale = 0.5]{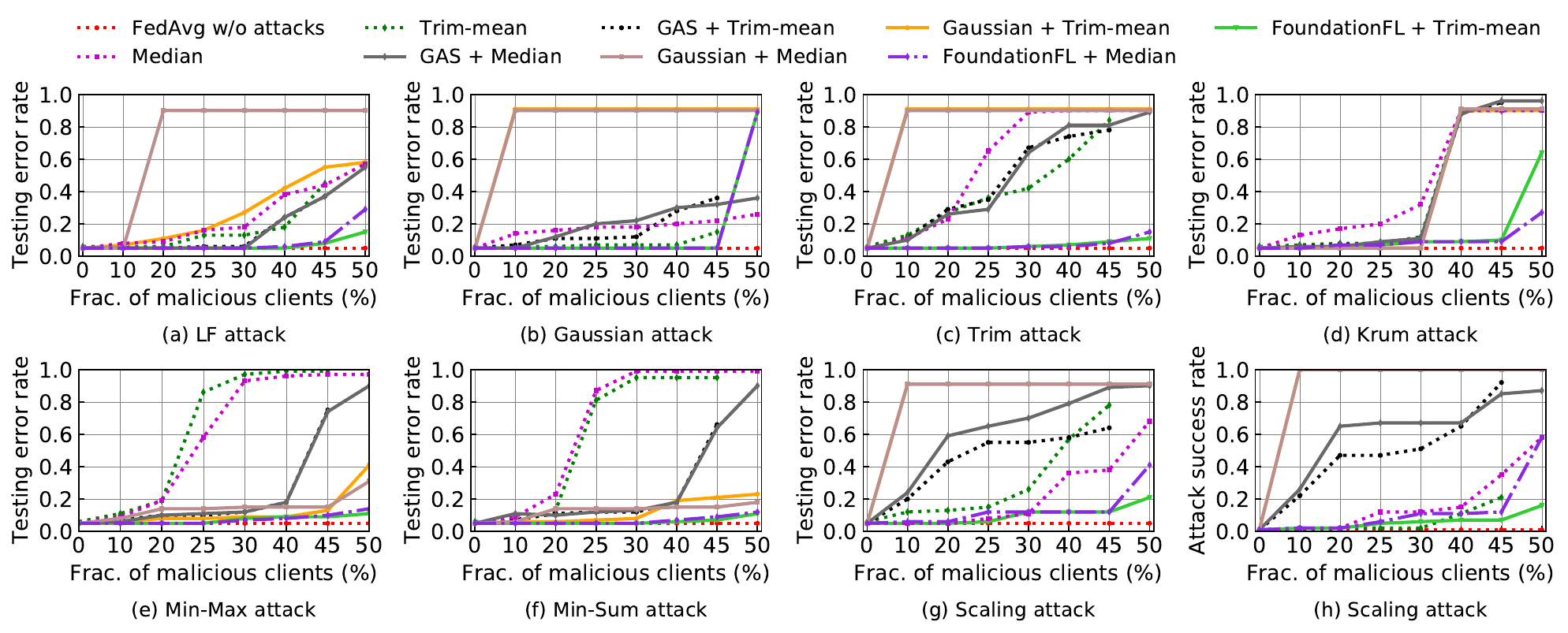}
	\caption{Impact of fraction of malicious clients.}
	\label{fig_attack_size}
		\vspace{-.1in}
\end{figure*}

\myparatight{Impact of fraction of malicious clients}Fig.~\ref{fig_attack_size} illustrates the impact of poisoning attacks on the performance of various FL aggregation methods within the MNIST dataset, where the proportion of malicious clients increases from 0\% to 
50\%, 
with a total client base of 100. 
The fraction of malicious clients is computed as \( f/n \), with \( f \) representing the number of malicious clients and \( n \) the total client count.
Note that for the Trim-mean and ``GAS + Trim-mean'' aggregation rules, the fraction of malicious clients ranges only from 0 to 45\%, as these two methods require the number of malicious clients to be less than half of the total clients.
From Fig.~\ref{fig_attack_size}, it's evident that our proposed methods,  ``\alg + Trim-mean'' and ``\alg + Median'', remain robust against poisoning attacks, even when up to 45\% of the clients are malicious. For example, the testing error rate for ``\alg + Trim-mean'' only slightly increases from 0.05 with no attack to 0.06 under the strong Trim attack when 30\% of clients are malicious. 
With 45\% malicious clients, ``\alg + Trim-mean'' and ``\alg + Median'' maintain error rates no higher than 0.12 across different attacks.
Conversely, with just 10\% malicious clients, the testing error rate for ``Gaussian + Median'' reaches 0.90 under the Trim attack.

\begin{figure*}[!t]
	\centering
	\includegraphics[scale = 0.5]{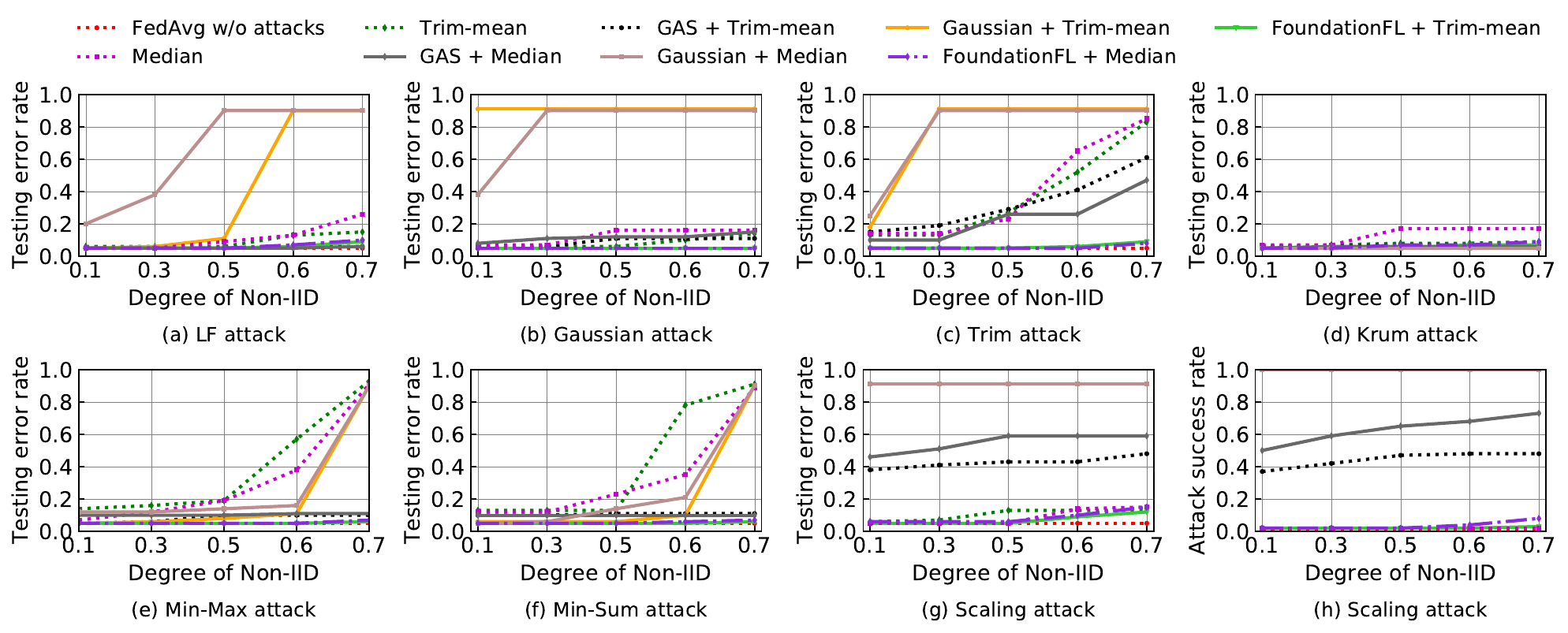}
	\caption{Impact of degree of Non-IID.}
	\label{fig_non_iid}
		\vspace{-.1in}
\end{figure*}

\begin{figure*}[!t]
	\centering
	\includegraphics[scale = 0.5]{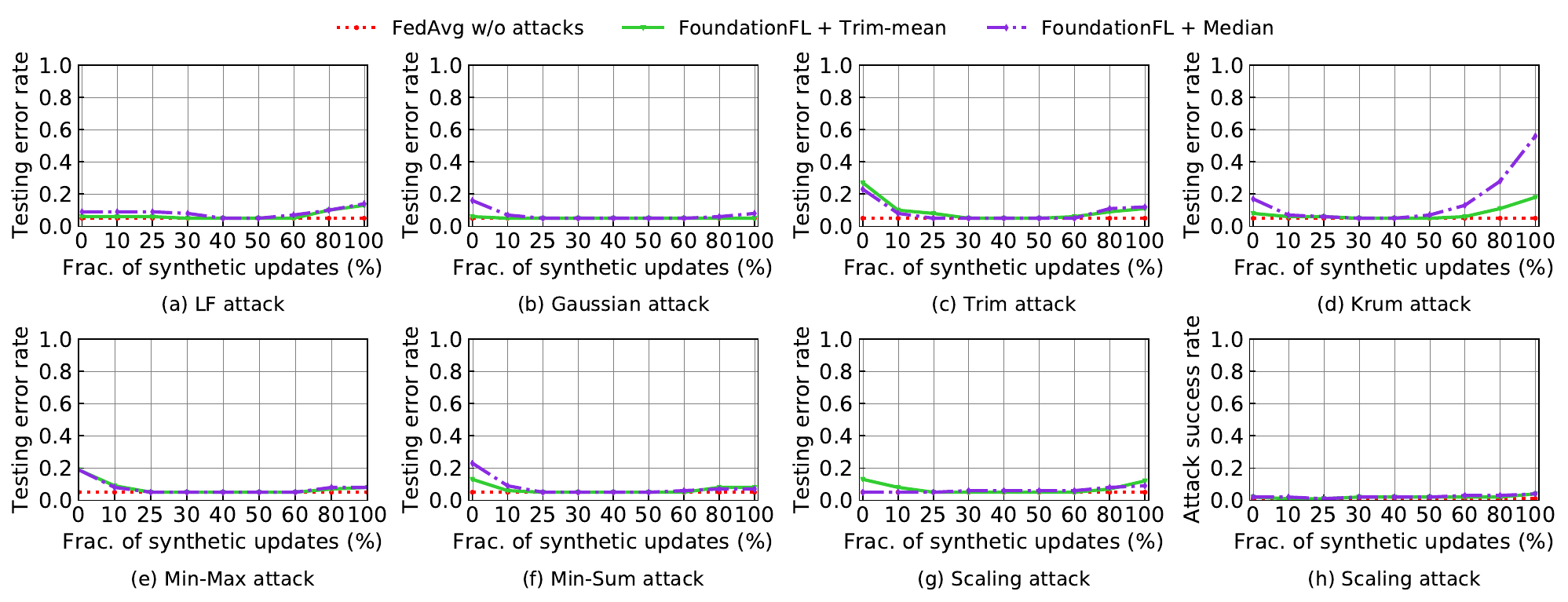}
	\caption{Impact of fraction of synthetic updates.}
	\label{fig_fake_number}
		\vspace{-.1in}
\end{figure*}

\begin{figure*}[!t]
	\centering
	\includegraphics[scale = 0.5]{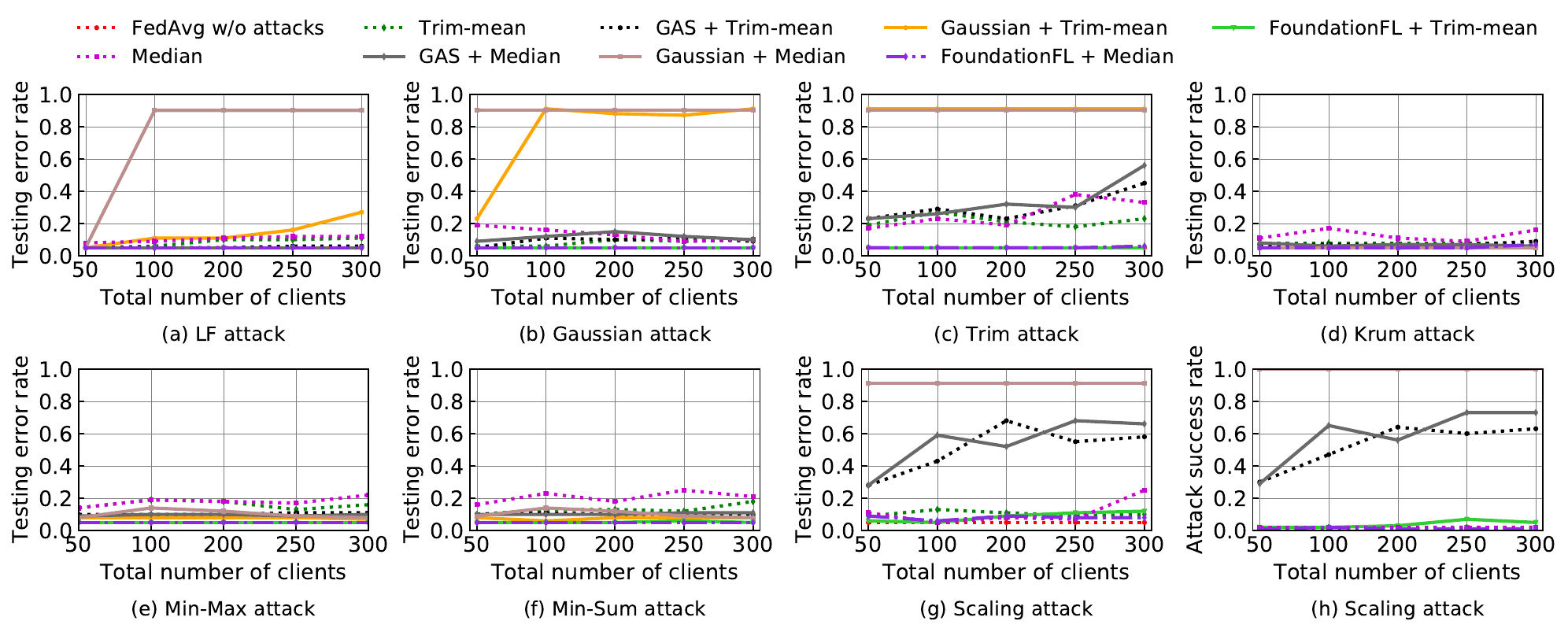}
	\caption{Impact of total number of clients.}
	\label{fig_total_client}
		\vspace{-.1in}
\end{figure*}

\myparatight{Impact of degree of Non-IID}In FL, a distinct characteristic is the Non-IID nature of clients' local training data. When this data is notably diverse, the attacker find it easier to craft malicious model updates that appear benign yet can significantly disrupt the targeted FL system. This is particularly problematic when the system employs Byzantine-robust foundational aggregation rules like Trimmed-mean or Median to merge these updates. Our findings, depicted in Fig.~\ref{fig_non_iid}, investigates the influence of Non-IID data heterogeneity on the efficacy of various FL methods. The degree of Non-IIDness explored ranges from 0.1 to 0.7, with other parameters set to default values. The findings illustrated in Fig.~\ref{fig_non_iid} demonstrate that despite the high heterogeneity in clients' training data, our proposed \alg framework effectively shields against diverse poisoning attacks.

\myparatight{Impact of fraction of synthetic updates}In our framework, once the server collects $n$ local model updates from clients during each global training round, it proceeds to generate $m$ synthetic updates. These $n+m$ updates are then aggregated using either the Trim-mean or Median method. This section explores how the proportion of synthetic updates, calculated as $m/n$, influences our proposed \alg. The results are displayed in Fig.~\ref{fig_fake_number}. From these findings, it is evident that our methods, ``\alg + Trim-mean'' and ``\alg + Median'', exhibit robustness against variations in the proportion of synthetic updates. 
%
%
It is important to note that when the fraction of synthetic updates is 0\%, our ``\alg + Trim-mean'' and ``\alg + Median'' methods are equivalent to the ``Trim-mean'' and ``Median'' rules, respectively. From Fig.~\ref{fig_fake_number}, we observe that \alg requires between 10\% and 60\% synthetic updates to effectively defend against various poisoning attacks. This is because too few synthetic updates fail to mitigate the influence of malicious clients, while too many could overshadow the benign updates within the system.

\begin{table*}[htbp]
  \centering
  \addtolength{\tabcolsep}{-1.985pt}
  \caption{Results of different defenses on a production FL system.}
          \vspace{-0.06in}
    \begin{tabular}{l|cccccccc}
    \hline
    Aggregation rule & No attack   & LF attack    & Gaussian attack & Trim attack & Krum attack & Min-Max attack & Min-Sum attack & \multicolumn{1}{c}{Scaling attack} \\
     \hline
    FedAvg  & 0.05 & 0.08  & 0.06 & 0.30 & 0.08 & 0.75 & 0.75   & 0.71 / 0.69  \\
    \hline
    Trim-mean  & 0.06  & 0.09   & 0.06 & 0.27 & 0.09 & 0.28 &  0.13  & 0.13 / 0.01  \\
    GAS + Trim-mean  & 0.06 & 0.07  &  0.07 & 0.32 & 0.07 & 0.10 &  0.10  & 0.32 / 0.28  \\
    Gaussian + Trim-mean   & 0.05 & 0.19  & 0.91 & 0.91 & 0.06 & 0.06 & 0.08   & 0.89 / 1.00  \\
    \rowcolor{greyL}
    \alg + Trim-mean  & 0.05  & 0.05  & 0.05 & 0.05 & 0.05 & 0.05 &  0.05  &  0.05 / 0.01 \\
     \hline
    Median  & 0.06 & 0.08 &0.06  & 0.38 & 0.11 & 0.21 & 0.16   &  0.06 / 0.02 \\
    GAS + Median  & 0.05 &  0.05 & 0.07 & 0.47 &  0.17 & 0.08 &  0.08  & 0.55 / 0.53  \\
    Gaussian + Median  & 0.05  & 0.91  &   0.91&  0.91 & 0.05 & 0.08 &  0.07  & 0.91 / 1.00  \\
    \rowcolor{greyL}
    \alg + Median  & 0.05 & 0.06  & 0.05 & 0.05  & 0.07 & 0.05 &  0.05  & 0.05 / 0.02  \\
     \hline
    \end{tabular}%
     \label{result_table_production_fl}%
\end{table*}%

\begin{table*}[htbp]
  \centering
  \caption{Results of ``\alg + Trim-mean'' in scenarios where the server lacks knowledge of the number of malicious clients $f$, the trim parameter is set to $c=30$ for $f=10$ and $f=20$.}
    \begin{tabular}{l|cccccccc}
    \hline
    $f$ & No attack   & LF attack    & Gaussian attack & Trim attack & Krum attack & Min-Max attack & Min-Sum attack & \multicolumn{1}{c}{Scaling attack} \\
     \hline
     $f=10$  & 0.05  & 0.07  &  0.06  & 0.07 &0.06 & 0.06  & 0.06  & 0.05 / 0.02 \\
     $f=20$  &  0.05 & 0.06 & 0.05  & 0.07 & 0.06 & 0.05   &  0.05 & 0.05 / 0.02 \\
     Estimate  & 0.05 & 0.08  & 0.06  & 0.08 & 0.06 &  0.06  & 0.05  & 0.06 / 0.02 \\
     \hline
    \end{tabular}%
     \label{result_malicious_number_unknow}
\end{table*}%

\myparatight{Impact of total number of clients}Fig.~\ref{fig_total_client} displays results across a varying total number of clients from 50 to 300, with a consistent fraction of 20\% malicious clients and other parameters at default settings, using the MNIST dataset. The figure demonstrates that our methods, ``\alg + Trim-mean'' and ``\alg + Median'', consistently defend against poisoning attacks effectively across all client scales. Conversely, traditional robust aggregation methods like ``GAS + Median'' fail to adequately counteract the effects of these attacks. Notably, as the total number of clients ranges from 50 to 300, the testing error rates for ``GAS + Median'' remain significantly high under a Trim attack.

\myparatight{Results of various defense methods on an alternative CNN architecture}In this part, we demonstrate the robustness of various aggregation methods on an alternative CNN architecture, with details of this architecture provided in Table~\ref{cnn_arch_diff} in Appendix. Results for the different defense methods are presented in Table~\ref{result_table_diff_arch} in Appendix. From Table~\ref{result_table_diff_arch}, we observe that our proposed \alg can effectively defend against various poisoning attacks, even with this CNN architecture. For instance, the test error rates of ``\alg + Trim-mean'' and ``\alg + Median'' under different attacks match those of FedAvg in the absence of any attacks. In contrast, existing FL methods show vulnerability; for example, the testing error rate of ``GAS + Median'' reaches 0.28 under the strong Trim attack.

\myparatight{Results of various defenses with subset client selection per training round}By default, we assume full client participation in each round of FL training. Here, we examine a setup where only a subset of clients joins each round. Specifically, the server randomly selects 30\% of the clients each round to receive the current global model. In this configuration, \alg still generates synthetic updates amounting to half of the received local updates. For instance, if the server receives 30 model updates, it generates an additional 15 synthetic ones. Table~\ref{result_table_subset_select} in Appendix shows defense results for this subset selection scenario. We observe that \alg remains robust against various poisoning attacks, whereas existing FL methods show vulnerabilities. For instance, under the Scaling attack, the testing error rate and attack success rate for ``GAS + Trim-mean'' reach 0.36 and 0.41, respectively.

\myparatight{Transferability of \alg}In this part, we demonstrate that our proposed \alg is transferable to other aggregation rules. Specifically, after receiving clients' model updates, the server generates synthetic updates as outlined in Section~\ref{our_method}. Rather than applying the Trimmed-mean or Median aggregation rules, the server instead uses the aggregation methods listed in Table~\ref{result_complex}, such as Krum, FoolsGold, or FLAME, to merge these updates. The results, presented in Table~\ref{result_transferability} in Appendix, show that \alg effectively transfers to different aggregation protocols. For example, with FLAME method, ``\alg + FLAME'' achieves a testing error rate of 0.10 under the Trim attack, whereas FLAME alone results in a 0.17 error rate (refer to Table~\ref{result_complex}).

\myparatight{Scalability of \alg}To illustrate the scalability of our proposed \alg method, we conducted experiments on a production FL system~\cite{shejwalkar2022back,bonawitz2019towards,lai2022fedscale}. Following the setup in~\cite{shejwalkar2022back}, we assume a total of 1,000 clients, with 20\% being malicious. In each training round, the server randomly selects 30\% of clients to participate in the training process. 
The results, displayed in Table~\ref{result_table_production_fl}, show that both ``\alg + Trim-mean'' and ``\alg + Median'' under the Trim attack achieve testing error rates that align with FedAvg’s performance in the absence of an attack, confirming the scalability of \alg.

\begin{table*}[htbp]
  \centering
  \addtolength{\tabcolsep}{-1.985pt}
  \caption{Results of different FL methods, with each client possessing only four labels.}
    \begin{tabular}{l|cccccccc}
    \hline
    Aggregation rule & No attack   & LF attack    & Gaussian attack & Trim attack & Krum attack & Min-Max attack & Min-Sum attack & \multicolumn{1}{c}{Scaling attack} \\
     \hline
    FedAvg  &  0.06   & 0.27  & 0.09  &  0.13 &  0.06   &  0.11 &   0.09    &  0.07 / 0.15 \\
    \hline
    Trim-mean & 0.08  &0.31 &0.08  & 0.42 & 0.19   &  0.12   &     0.12  &  0.08 / 0.05 \\
    GAS + Trim-mean &   0.06  &  0.06  &  0.09   & 0.31  & 0.12   &  0.11  &   0.11    & 0.09 / 0.29 \\
    Gaussian + Trim-mean & 0.06   & 0.26   &  0.91    & 0.91&     0.06   & 0.08 &   0.08 & 0.91 / 1.00 \\
    \rowcolor{greyL}
    \alg + Trim-mean & 0.06     & 0.08  &  0.06  &  0.09 &  0.07   &  0.07   &  0.06 & 0.06 / 0.03 \\
     \hline
    Median & 0.07  & 0.32  &   0.33   & 0.58  & 0.25 & 0.19  &  0.20     & 0.09 / 0.05 \\
    GAS + Median & 0.07   &  0.07  &  0.11  & 0.35  & 0.12  & 0.11    &  0.11     &  0.15 / 0.23\\
    Gaussian + Median &  0.06  & 0.90 & 0.89   &  0.89  &  0.07 &   0.17  &  0.18     & 0.91 / 1.00 \\
    \rowcolor{greyL}
    \alg + Median & 0.06 &   0.10   & 0.06  & 0.11   & 0.11  & 0.07      &  0.08     &  0.06 / 0.02 \\
     \hline
    \end{tabular}%
     \label{result_table_non_iid}%
\end{table*}%


\myparatight{Performance of ``\alg + Trim-mean'' when the number of malicious clients is unknown}
Note that in the Trim-mean aggregation method, the server initially excludes the largest $c$ and smallest $c$ values for each dimension before calculating the average of the remaining values. Following previous studies~\cite{blanchard2017machine,yin2018byzantine,liu2023byzantine}, we assume that the trim parameter $c$ equals the total number of malicious clients $f$. However, in practical scenarios, the server may not have exact knowledge of the number of malicious clients, and the attacker only knows that $f$ is bounded by $c$, i.e., $c>f$. Table~\ref{result_malicious_number_unknow} presents the results of our proposed ``\alg + Trim-mean'' approach when the server lacks precise information about the number of malicious clients. Specifically, the trim parameter is set to $c=30$, meaning the server excludes the largest 30 and smallest 30 values per dimension and averages the remaining 40 values (out of 100 clients in total). ``$f=10$'' indicates that there are actually 10 malicious clients. Note that in both cases where ``$f=10$'' and ``$f=20$'', the server still generates $\frac{n}{2}=50$ synthetic updates. 
``Estimate'' refers to the approach in which the server approximates the number of malicious clients in the system. Specifically, in each round, the server calculates the pairwise cosine distances between each pair of client model updates, then applies the K-means~\cite{hartigan1979algorithm} clustering algorithm to divide all client model updates into two clusters. Based on the assumption that the majority of clients are benign, the cluster with fewer clients is considered malicious, and the trim parameter, representing the estimated number of malicious clients, is set to the size of this cluster.
The results in Table~\ref{result_malicious_number_unknow} demonstrate that our proposed ``\alg + Trim-mean'' remains effective even when the actual number of malicious clients is within the bounds defined by the trim parameter or the server approximates the count of malicious clients.


\begin{figure}[!t]
	\centering
	\includegraphics[scale = 0.46]{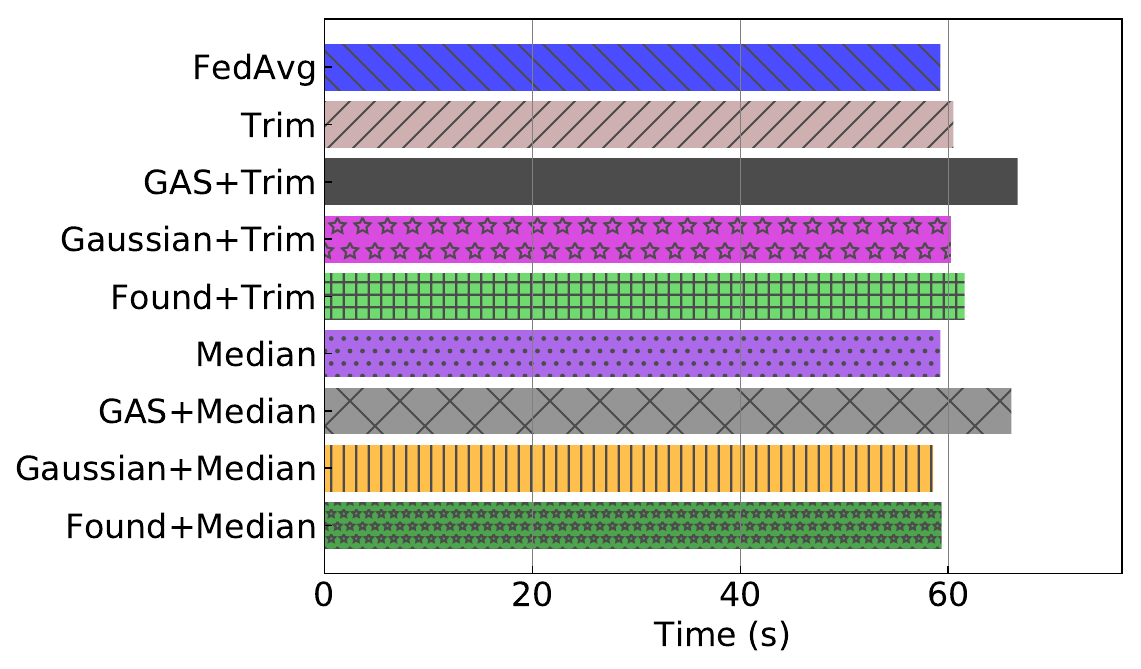}
	\caption{Computation cost of different FL methods.}
	\label{fig_computation_cost}
\end{figure}

\begin{table*}[htbp]
	\centering
       \scriptsize
	\addtolength{\tabcolsep}{-2.3pt}
	\caption{Results of various defenses against DBA and Neurotoxin attacks.}
	  \subfloat[MNIST dataset.]
	{
	\begin{tabular}{l|ccc}
		\hline
		Aggregation rule &  DBA  & Neurotoxin \\
		\hline
		FedAvg  & 0.48 / 0.65 & 0.76 / 0.73 \\
		\hline
		Trim-mean & 0.06 / 0.02 & 0.09 / 0.02 \\
		GAS + Trim-mean & 0.27 / 0.19 & 0.43 / 0.54 \\
		Gaussian + Trim-mean  & 0.91 / 1.00 & 0.91 / 1.00 \\
		\rowcolor{greyL}
		\alg + Trim-mean & 0.05 / 0.02 & 0.05 / 0.01 \\
		\hline
		Median & 0.05 / 0.02 & 0.05 / 0.02 \\
		GAS + Median  & 0.43 / 0.51 & 0.76 / 0.82 \\
		Gaussian + Median  & 0.91 / 1.00 & 0.91 / 1.00 \\
		\rowcolor{greyL}
		\alg + Median  & 0.05 / 0.02 & 0.06 / 0.02 \\
		\hline
	\end{tabular}%
	}
	  \subfloat[Fashion-MNIST dataset.]
	{
	\begin{tabular}{l|ccc}
		\hline
		Aggregation rule &  DBA  & Neurotoxin \\
		\hline
		FedAvg  & 0.76 / 0.72  & 0.71 / 0.79 \\
		\hline
		Trim-mean & 0.29 / 0.06 & 0.30 / 0.05 \\
		GAS + Trim-mean & 0.53 / 0.62 & 0.60 / 0.71 \\
		Gaussian + Trim-mean  & 0.90 / 1.00 & 0.90 / 1.00 \\
		\rowcolor{greyL}
		\alg + Trim-mean & 0.22 / 0.02 & 0.22 / 0.02 \\
		\hline
		Median & 0.27 / 0.03  & 0.31 / 0.03  \\
		GAS + Median   & 0.60 / 0.61 & 0.78 / 0.83  \\
		Gaussian + Median  & 0.90 / 1.00 &  0.90 / 1.00 \\
		\rowcolor{greyL}
		\alg + Median  & 0.23 / 0.03 & 0.22 / 0.02 \\
		\hline
	\end{tabular}%
	}
	  \subfloat[HAR dataset.]
	{
	\begin{tabular}{l|ccc}
		\hline
		Aggregation rule &  DBA  & Neurotoxin \\
		\hline
		FedAvg  & 0.05 / 0.91 & 0.05 / 0.89 \\
		\hline
		Trim-mean & 0.07 / 0.02 & 0.07 / 0.02 \\
		GAS + Trim-mean & 0.14 / 0.68  & 0.07 / 0.84 \\
		Gaussian + Trim-mean  & 0.59 / 0.02 & 0.75 / 0.16 \\
		\rowcolor{greyL}
		\alg + Trim-mean & 0.05 / 0.01  & 0.05 / 0.01 \\
		\hline
		Median & 0.07 / 0.02 & 0.07 / 0.02 \\
		GAS + Median  & 0.07 / 0.75  & 0.11 / 0.82 \\
		Gaussian + Median  & 0.07 / 0.03 & 0.08 / 0.10 \\
		\rowcolor{greyL}
		\alg + Median  & 0.05 / 0.01 & 0.05 / 0.01 \\
		\hline
	\end{tabular}%
	}
	\\
		  \subfloat[Purchase dataset.]
	{
	\begin{tabular}{l|ccc}
		\hline
		Aggregation rule &  DBA  & Neurotoxin \\
		\hline
		FedAvg  & 0.99 / 0.31  & 0.99 / 0.35 \\
		\hline
		Trim-mean & 0.21 / 0.04  & 0.24 / 0.02 \\
		GAS + Trim-mean & 0.25 / 0.09 & 0.33 / 0.06  \\
		Gaussian + Trim-mean  & 0.99 / 1.00 & 0.99 / 1.00 \\
		\rowcolor{greyL}
		\alg + Trim-mean & 0.20 / 0.02  & 0.18 / 0.01 \\
		\hline
		Median & 0.23 / 0.01 & 0.26 / 0.03 \\
		GAS + Median  & 0.22 / 0.02 & 0.29 / 0.18 \\
		Gaussian + Median  & 0.99 / 1.00 & 0.99 / 1.00 \\
		\rowcolor{greyL}
		\alg + Median  & 0.21 / 0.03 & 0.23 / 0.03 \\
		\hline
	\end{tabular}%
	}
	\subfloat[CelebA dataset.]
	{
	\begin{tabular}{l|ccc}
		\hline
		Aggregation rule &  DBA  & Neurotoxin \\
		\hline
		FedAvg  & 0.41 / 0.05 & 0.46 / 0.05 \\
		\hline
		Trim-mean & 0.49 / 0.11 & 0.35 / 0.03 \\
		GAS + Trim-mean & 0.36 / 0.04 & 0.36 / 0.03  \\
		Gaussian + Trim-mean   & 0.52 / 0.17 & 0.64 / 0.20 \\
		\rowcolor{greyL}
		\alg + Trim-mean & 0.24 / 0.02  & 0.24 / 0.02 \\
		\hline
		Median & 0.28 / 0.08  & 0.39 / 0.12 \\
		GAS + Median  & 0.32 / 0.05 & 0.31 / 0.09 \\
		Gaussian + Median  & 0.43 / 0.02 & 0.49 / 0.04 \\
		\rowcolor{greyL}
		\alg + Median  & 0.24 / 0.01 & 0.25 / 0.02 \\
		\hline
	\end{tabular}%
	}
	\subfloat[CIFAR-10 dataset.]
	{
	\begin{tabular}{l|ccc}
		\hline
		Aggregation rule &  DBA  & Neurotoxin \\
		\hline
		FedAvg  & 0.90 / 1.00 & 0.90 / 1.00 \\
		\hline
		Trim-mean & 0.28 / 0.96 & 0.26 / 0.99 \\
		GAS + Trim-mean & 0.25 / 0.80 & 0.31 / 0.92 \\
		Gaussian + Trim-mean  & 0.99 / 1.00 & 0.98 / 1.00 \\
		\rowcolor{greyL}
		\alg + Trim-mean & 0.22 / 0.01 & 0.23 / 0.02 \\
		\hline
		Median & 0.31 / 0.84 & 0.37 / 0.91 \\
		GAS + Median  & 0.45 / 0.96 & 0.32 / 0.99 \\
		Gaussian + Median  & 0.99 / 1.00 & 0.99 / 1.00 \\
		\rowcolor{greyL}
		\alg + Median  & 0.24 / 0.02 & 0.26 / 0.03 \\
		\hline
	\end{tabular}%
	}
	\label{result_more_backdoor}%
\end{table*}%

\myparatight{Computation cost of different FL methods}Fig.~\ref{fig_computation_cost} illustrates the computational costs of various FL methods during 2,000 rounds of training on the MNIST dataset. The computation cost represents the time required by the server to aggregate model updates over these rounds. It is important to note that for our proposed method, the computation cost also includes the time taken to generate synthetic updates. In Fig.~\ref{fig_computation_cost}, ``Trim'', ``GAS+Trim'', ``Gaussian+Trim'', ``Found+Trim'', and  ``Found+Median'' refer to the methods ``Trim-mean'', ``GAS + Trim-mean'', ``Gaussian + Trim-mean'', ``\alg + Trim-mean'', and  ``\alg + Median'', respectively. 
%
The additional computational overhead introduced by \alg primarily stems from the generation of synthetic updates, which are designed to enhance robustness. Although this process requires extra computations compared to standard FL approaches, the overhead remains manageable and does not compromise system robustness or accuracy. 
The synthetic update generation process has been fine-tuned to operate efficiently within each round, ensuring that the overall latency does not diverge significantly from baseline methods. Additionally, as shown in Fig.~\ref{fig_computation_cost}, the method incorporating \alg with a robust aggregator (Trim-mean or Median) exhibits slightly higher computational overhead compared to FedAvg, underscoring the effectiveness of our approach in balancing robustness and efficiency.


\section{Discussion and Limitations} 
\label{sec:discussion_limitation}

\myparatight{More extreme Non-IID distribution}In the previous section, we demonstrated the effectiveness of our proposed method in safeguarding FL systems against highly heterogeneous local training data among clients. Here, we explore a more extreme scenario where each client possesses only a few distinct labels. Specifically, we consider a situation where each client has only four different labels in their training data. For instance, one client may have data labeled from one to four, while another client may have data labeled from five to eight. The results of various FL methods under different attack conditions in this extreme setting are presented in Table~\ref{result_table_non_iid}. From Table~\ref{result_table_non_iid}, it is evident that \alg can effectively mitigate poisoning attacks in FL systems even under such challenging conditions.

\myparatight{More targeted attacks}Here, we demonstrate the robustness of various defense mechanisms against two more targeted attacks: the DBA attack~\cite{xie2019dba} and the Neurotoxin attack~\cite{zhang2022neurotoxin}. Table~\ref{result_more_backdoor} presents the results across six datasets, where ``DBA'' and ``Neurotoxin'' represent the respective attacks. Our \alg method shows low testing error rates and attack success rates under these attacks. In contrast, existing aggregation rules remain vulnerable to poisoning attacks; for example, the attack success rate of Trim-mean reaches 0.96 on the CIFAR-10 dataset under the DBA attack.

\myparatight{More sophisticated and adaptive attacks}In Section~\ref{sec:exp}, we show the capability of our proposed \alg framework to effectively mitigate seven distinct poisoning attacks. In this section, we further examine three additional sophisticated attacks, which include two adaptive attack strategies.

\myparatight{a) MPAF attack~\cite{cao2022mpaf}}MPAF is an untargeted attack where the attacker steers the current global model towards an attacker-chosen model during each training round.

\myparatight{b) Adaptive attack I~\cite{baruch2019little}}In this untargeted attack, the attacker introduces small perturbations to benign local model updates to hinder the convergence of the global model. As shown in~\cite{baruch2019little}, this method allows the attacker to effectively compromise the final learnt global model while evading detection.

\myparatight{c) Adaptive attack II~\cite{baruch2019little}}This attack is both targeted and adaptive, involving the insertion of backdoor triggers into local training data by the attacker on a malicious client. However, unlike scaling malicious local model updates with a fixed factor, the attacker dynamically determines the scaling factor through an optimization process.

Table~\ref{result_adaptive} presents the results of various FL methods under sophisticated attack scenarios. 
``MPAF'', ``Adaptive I'', and ``Adaptive II'' represent the ``MPAF attack'', ``Adaptive attack I'', and ``Adaptive attack II'' respectively.
From the table, it is evident that despite the attacker's efforts to evade detection, our proposed \alg method effectively mitigates these sophisticated attacks.

\begin{table}[t]
  \centering
  \caption{Results of different FL methods under more sophisticated and adaptive attacks.}
    \begin{tabular}{l|ccc}
    \hline
    Aggregation rule & MPAF &  Adaptive I  & Adaptive II \\
     \hline
    FedAvg  & 0.90  & 0.90 &  0.90 / 1.00 \\
    \hline
    Trim-mean  &  0.23 & 0.21 &  0.19 / 0.06 \\
     GAS + Trim-mean  &  0.13 & 0.28 &  0.60 / 0.62 \\
    Gaussian + Trim-mean  & 0.90  & 0.90 &  0.91 / 1.00 \\
    \rowcolor{greyL}
    \alg + Trim-mean  & 0.05  & 0.06 & 0.06 / 0.02  \\
     \hline
    Median  & 0.29  & 0.17 &  0.13 / 0.02 \\
    GAS + Median  &  0.06 & 0.19 & 0.71 / 0.54  \\
    Gaussian + Median  & 0.91  & 0.91 & 0.91 / 1.00  \\
    \rowcolor{greyL}
    \alg + Median  & 0.06  & 0.06 & 0.06 / 0.03  \\
     \hline
    \end{tabular}%
     \label{result_adaptive}%
\end{table}%

\myparatight{Further discussion on the threat model for the ratio of malicious clients}From Fig.~\ref{fig_attack_size}, we observe that our proposed \alg framework can tolerate up to 45\% of malicious clients. However, when 50\% of clients are malicious, the testing error rates of \alg become significantly higher; for example, under the Krum attack, error rates rise to 0.64 for ``\alg + Trim-mean'' and 0.27 for ``\alg + Median''. Nonetheless, compromising such a large fraction of malicious clients is impractical, as demonstrated in~\cite{shejwalkar2022back}. This is because achieving such high numbers of malicious clients in a real-world FL setup is unlikely due to the distributed and decentralized nature of the system, making it challenging to mobilize or control such a large fraction of participants for malicious purposes.

\myparatight{Potential challenges introduced by \alg}Our proposed methodology incorporates synthetic updates to mitigate the influence of potentially malicious updates within the system. By employing techniques such as Trimmed-Mean or Median aggregation, we effectively reduce the weight of outlier contributions, thereby minimizing their impact and preventing the model from overfitting. 
Through extensive experiments, we demonstrate that this approach does not degrade performance metrics, such as testing accuracy, serving as evidence that our method is not prone to overfitting. 
Although bias is not the focus of this paper, we can pair our \alg with bias reduction techniques, such as regularization and fairness-constrained optimization methods~\cite{abay2020mitigating,cui2021addressing,guo2023fedbr}. 
By combining our synthetic update strategy with these bias reduction techniques, we can create a more robust and fair model that not only performs well but also addresses potential biases.

\myparatight{Limitations of \alg}In this study, we show that creating entirely new robust aggregation protocols may not be necessary to secure FL systems effectively. Rather, by strengthening the robustness of existing Byzantine-resistant foundational aggregation methods, such as Trimmed-mean or Median, we can achieve substantial resilience against poisoning attacks. Nevertheless, our proposed \alg framework has certain limitations. Firstly, it is restricted to coordinate-wise aggregation rules, limiting its compatibility with other aggregation approaches. Secondly, \alg introduces a slightly higher computational overhead compared to the commonly used FedAvg approach. 



\section{Conclusion} \label{sec:conclusion}

In this work, we introduced a new approach, referred to as \alg, aimed at countering poisoning attacks in FL systems. Rather than designing intricate new Byzantine-robust aggregation protocols, our goal is to bolster the resilience of FL systems using established Byzantine-robust foundational aggregation protocols. In our proposed framework, the server takes a proactive stance by generating synthetic updates upon receiving local model updates from clients. Subsequently, the server employs existing Byzantine-robust foundational aggregation protocols like Trimmed-mean or Median to merge the local model updates from clients with the generated synthetic updates. We demonstrated the convergence performance of our framework under poisoning attacks and conducted extensive experiments across diverse scenarios to validate the effectiveness of our proposed techniques. In the future, we intend to expand our approach to incorporate other non-coordinate wise aggregation protocols such as Krum.

\section*{Acknowledgement}
We thank the anonymous reviewers for their constructive comments.

\bibliographystyle{IEEEtranS}
\bibliography{refs}


\appendix

\subsection{Proof of Theorem~\ref{Theorem1}} 
\label{sec:appendix_1}

Following~\cite{cao2020fltrust}, we adopt a slight notation abuse in our proof, using $\bm{g}_i^t$ to denote the gradient of client $i$ in training round $t$.
Prior to proving our main theoretical results, we first present several helpful lemmas.

\begin{lem}
\label{trim_lem}
Suppose that Assumptions~\ref{assumption_1}-\ref{assumption_3} and Assumption~\ref{assumption_5} are satisfied, and the server employs  the Trimmed-mean aggregation rule to merge  the synthetic model updates and model updates from clients.
At training round $t$, there exists a probability of at least $1-\frac{4d}{(1+ (n+m) \lambda \varpi Q)^d}$ such that the following holds:
\begin{align}
\left\|  \bm{g}(\bm{\theta}^t) -  \nabla L(\bm{\theta}^t)   \right\| \le B_1,
 \end{align}
where $\bm{g}(\bm{\theta}^t)$ is the global model update, $B_1=\mathcal{O}((\frac{\rho c d}{\upsilon (n+m) \sqrt{Q}} + \frac{\rho d}{\upsilon \sqrt{(n+m) Q}}) \sqrt{ \log((n+m) \lambda \varpi Q)} ) $.
\begin{proof}
 The proof proceeds similarly to Theorem 11 in~\cite{yin2018byzantine}, and we omit it here for conciseness.
\end{proof}
 
\end{lem}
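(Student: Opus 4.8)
The plan is to adapt the coordinate-wise trimmed-mean analysis of~\cite{yin2018byzantine} (their Theorem~11), rewriting its ``$m$ machines, $n$ samples per machine'' bookkeeping in our ``$n+m$ updates, at most $Q$ samples per update'' language. Under the notational convention stated at the start of the appendix, $\bm{g}_i^t=\nabla L_i(\bm{\theta}^t)$ and, with local learning rate $\alpha=1/\lambda$, the aggregate $\bm{g}(\bm{\theta}^t)$ is simply the coordinate-wise trimmed mean of $n+m$ gradient vectors --- at most $f$ of them adversarial, the remaining $n+m-f\ge(n+m)/2$ honest (the benign clients together with the $m$ synthetic updates, which the theoretical model treats as independent honest gradient estimates computed on the clean dataset $D_0$, drawn from the same distribution as the clients' data). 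It therefore suffices to control $|\bm{g}(\bm{\theta}^t)[k]-\partial_k L(\bm{\theta}^t)|$ for each $k\in[d]$ and then combine across coordinates.

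Fix a coordinate $k$. Each honest update contributes $\partial_k L_i(\bm{\theta}^t)=\frac{1}{|D_i|}\sum_{z\in D_i}\partial_k l(\bm{\theta}^t,z)$, an average of $|D_i|$ i.i.d.\ $\rho$-sub-exponential per-sample partial derivatives (Assumption~\ref{assumption_5}); a Bernstein-type bound places it within $\mathcal{O}(\rho\sqrt{\log(\cdot)/Q})$ of $\partial_k L(\bm{\theta}^t)$. A union bound over the honest updates gives, with high probability, that every honest coordinate value lies in an interval $I_k$ centred at $\partial_k L(\bm{\theta}^t)$ of half-width $\mathcal{O}(\rho\sqrt{\log(\cdot)/Q})$, and that the honest \emph{mean}, which effectively averages $\Theta((n+m)Q)$ samples, lies within $\mathcal{O}(\rho\sqrt{\log(\cdot)/((n+m)Q)})$ of $\partial_k L(\bm{\theta}^t)$.

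The structural core is the trimming step. Since $c\ge f$, after deleting the top $c$ and bottom $c$ values in coordinate $k$ every surviving value is sandwiched between two surviving honest values and hence lies in $I_k$; at most $f$ of the $n+m-2c$ surviving values are adversarial, and $\frac{c}{n+m}\le\frac12-\upsilon$ forces $n+m-2c\ge2\upsilon(n+m)$. Consequently the adversarial part perturbs the trimmed mean by at most $\frac{f}{2\upsilon(n+m)}\,\mathrm{width}(I_k)=\mathcal{O}(\frac{\rho c}{\upsilon(n+m)}\sqrt{\log(\cdot)/Q})$, while the honest part (the full honest mean, rescaled by $\frac{n+m}{n+m-2c}\le\frac{1}{2\upsilon}$, minus the few trimmed honest values, whose net effect is an order-statistics correction of the same order) stays within $\mathcal{O}(\frac{\rho}{\upsilon}\sqrt{\log(\cdot)/((n+m)Q)})$ of $\partial_k L(\bm{\theta}^t)$. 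Summing the two contributions gives the per-coordinate bound, and the two terms of $B_1$ correspond exactly to these two sources.

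To make this hold at the random iterate $\bm{\theta}^t$, I would promote it to a bound uniform over $\Theta$: invoking the bounded diameter $\varpi$ (Assumption~\ref{assumption_3}), cover $\Theta$ by a $\delta$-net with $\delta\asymp 1/((n+m)\lambda Q)$, which has at most $(1+c_1(n+m)\lambda\varpi Q)^d$ points; apply the concentration above at every net point and coordinate; and pass to arbitrary $\bm{\theta}$ using $\lambda$-smoothness (Assumption~\ref{assumption_2}). This simultaneously produces the failure probability $\frac{4d}{(1+(n+m)\lambda\varpi Q)^d}$ and fixes the logarithmic placeholder as $\log(\cdot)\asymp d\log((n+m)\lambda\varpi Q)$; combining the $d$ per-coordinate bounds in $\ell_2$ costs a further $\sqrt d$, and the two effects together yield the stated $d$-dependence of $B_1$. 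The hard part is precisely this interplay: carrying out the order-statistics/trimming estimate while tracking the $\upsilon$-factors, and choosing the net fine enough for the smoothness transfer yet coarse enough that the union bound does not overwhelm the sub-exponential tail; the rest is a routine transcription of~\cite{yin2018byzantine}.
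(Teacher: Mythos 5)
Your proposal is correct and follows essentially the same route as the paper, which gives no details at all and simply states that the proof ``proceeds similarly to Theorem 11 in Yin et al.\ (2018)'': your sketch is a faithful reconstruction of exactly that coordinate-wise sub-exponential concentration plus trimming plus covering-net argument, transcribed to the $(n+m)$-update, $Q$-sample, clean-$D_0$ setting the paper assumes for its analysis. The bookkeeping of the two terms of $B_1$, the $\upsilon$-factors from $n+m-2c\ge 2\upsilon(n+m)$, and the origin of the failure probability and $d$-dependence all match what the cited theorem yields.
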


\begin{lem}
\label{general_lem}
Suppose Assumptions~\ref{assumption_1}-\ref{assumption_2} hold. If the learning rate $\alpha$ 
used by the clients satisfies $\alpha=\frac{1}{\lambda}$, then in training round $t$, we have:
\begin{align}
\left\| \bm{\theta}^{t} - \alpha \nabla L(\bm{\theta}^{t}) - \bm{\theta} ^* \right\| \le  \left(1-\frac{\mu}{\mu+\lambda}\right)  \left\| \bm{\theta}^ {t}  - \bm{\theta} ^* \right\|, \nonumber
\end{align}
where $\bm{\theta}^*$ is the optimal model under no attack.

\begin{proof}
We start by analyzing the squared norm:
\begin{align}
\left\| \bm{\theta}^{t} - \alpha \nabla L(\bm{\theta}^{t}) - \bm{\theta} ^* \right\|^2 
&= \left\| \bm{\theta}^{t} - \bm{\theta} ^* \right\|^2 + \alpha^2 \left\| \nabla L(\bm{\theta}^{t})  \right\|^2  \nonumber \\
& \quad -2 \alpha \left\langle \bm{\theta}^{t}- \bm{\theta} ^*, \nabla L(\bm{\theta}^{t}) \right\rangle.
\end{align}

According to~\cite{bubeck2015convex}, for any $\bm{\theta}_1, \bm{\theta}_2 \in \Theta$, we have:
\begin{align}
\frac{\mu \lambda}{\mu+\lambda} \left\| \bm{\theta}_1 - \bm{\theta}_2 \right\|^2 + \frac{1}{\mu+\lambda} \left\| \nabla L(\bm{\theta}_1) - \nabla L(\bm{\theta}_2) \right\|^2  \nonumber \\
\le \left\langle \nabla L(\bm{\theta}_1) - \nabla  L(\bm{\theta}_2), \bm{\theta}_1 - \bm{\theta}_2 \right\rangle.
\end{align}

Setting $\bm{\theta}_1 = \bm{\theta}^t$ and $\bm{\theta}_2 = \bm{\theta}^*$, and noting $\nabla L(\bm{\theta}^*)=0$, we obtain:
\begin{align}
\frac{\mu \lambda}{\mu+\lambda} \left\| \bm{\theta}^t - \bm{\theta}^* \right\|^2 +
\frac{1}{\mu+\lambda} \left\| \nabla L(\bm{\theta}^t)  \right\|^2  \nonumber \\
\le 
\left\langle \bm{\theta}^t - \bm{\theta}^*, \nabla L(\bm{\theta}^t) \right\rangle.
\end{align}

Furthermore, with $\alpha=\frac{1}{\lambda}$, we derive:
\begin{align}
\left\| \bm{\theta}^{t} - \alpha \nabla L(\bm{\theta}^{t}) - \bm{\theta} ^* \right\|^2 
\le \left(1-\frac{2\mu}{\mu+\lambda}\right) \left\| \bm{\theta}^{t} - \bm{\theta}^* \right\|^2.
\end{align}

Given $\mu \le \lambda$, it follows that:
\begin{align}
\left\| \bm{\theta}^{t} - \alpha \nabla L(\bm{\theta}^{t}) - \bm{\theta} ^* \right\| \le \left(1-\frac{\mu}{\mu+\lambda}\right) \left\| \bm{\theta}^ {t}  - \bm{\theta}^* \right\|,
\end{align}
completing the proof.
\end{proof}

\end{lem}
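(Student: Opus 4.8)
The plan is to establish the one-step contraction by working with the \emph{squared} distance and only passing back to norms at the very end. First I would expand the square directly:
\[
\left\| \bm{\theta}^{t} - \alpha \nabla L(\bm{\theta}^{t}) - \bm{\theta}^* \right\|^2 = \left\| \bm{\theta}^{t} - \bm{\theta}^* \right\|^2 + \alpha^2 \left\| \nabla L(\bm{\theta}^t) \right\|^2 - 2\alpha \left\langle \bm{\theta}^t - \bm{\theta}^*, \nabla L(\bm{\theta}^t) \right\rangle,
\]
so that the entire difficulty is localized in lower-bounding the cross (inner-product) term relative to the two remaining square terms.

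The key ingredient I would invoke is the standard interpolation inequality for a function that is simultaneously $\mu$-strongly convex (Assumption~\ref{assumption_1}) and $\lambda$-smooth (Assumption~\ref{assumption_2}):
\[
\frac{\mu \lambda}{\mu+\lambda} \left\| \bm{\theta}_1 - \bm{\theta}_2 \right\|^2 + \frac{1}{\mu+\lambda} \left\| \nabla L(\bm{\theta}_1) - \nabla L(\bm{\theta}_2) \right\|^2 \le \left\langle \nabla L(\bm{\theta}_1) - \nabla L(\bm{\theta}_2), \bm{\theta}_1 - \bm{\theta}_2 \right\rangle.
\]
I would apply it with $\bm{\theta}_1 = \bm{\theta}^t$ and $\bm{\theta}_2 = \bm{\theta}^*$; since $\bm{\theta}^*$ minimizes $L$ we have $\nabla L(\bm{\theta}^*) = 0$, so the cross term is bounded below by $\frac{\mu\lambda}{\mu+\lambda}\|\bm{\theta}^t - \bm{\theta}^*\|^2 + \frac{1}{\mu+\lambda}\|\nabla L(\bm{\theta}^t)\|^2$. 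Substituting this into the expansion with $\alpha = 1/\lambda$, the coefficient multiplying $\|\nabla L(\bm{\theta}^t)\|^2$ collapses to $\frac{1}{\lambda^2} - \frac{2}{\lambda(\mu+\lambda)} = \frac{\mu - \lambda}{\lambda^2(\mu+\lambda)}$, which is non-positive precisely because $\mu \le \lambda$. Discarding this non-positive term then yields the squared contraction
\[
\left\| \bm{\theta}^{t} - \alpha \nabla L(\bm{\theta}^{t}) - \bm{\theta}^* \right\|^2 \le \left(1 - \frac{2\mu}{\mu+\lambda}\right)\left\| \bm{\theta}^t - \bm{\theta}^* \right\|^2.
\]

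The only step requiring care, and hence what I regard as the main (albeit mild) obstacle, is converting the squared contraction factor $1 - \frac{2\mu}{\mu+\lambda}$ into the claimed linear factor $1 - \frac{\mu}{\mu+\lambda}$. Writing $c = \frac{\mu}{\mu+\lambda}$, this reduces to the elementary inequality $1 - 2c \le (1-c)^2$, equivalent to $0 \le c^2$, so that $\sqrt{1 - 2c} \le 1 - c$. Taking the square root is legitimate because $\mu$-strong convexity together with $\lambda$-smoothness forces $\mu \le \lambda$, which makes $1 - \frac{2\mu}{\mu+\lambda} = \frac{\lambda - \mu}{\mu+\lambda} \ge 0$, while $1 - c \ge 0$ holds automatically since $c \le 1$. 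Thus $\mu \le \lambda$ is the quiet workhorse of the argument: it guarantees both that the squared contraction factor is nonnegative and that discarding the gradient-norm term earlier was valid. Chaining these observations gives the stated bound and completes the proof.
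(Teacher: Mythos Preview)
Your proposal is correct and mirrors the paper's proof essentially step for step: expand the square, apply the strong-convexity/smoothness co-coercivity inequality at $(\bm{\theta}^t,\bm{\theta}^*)$ with $\nabla L(\bm{\theta}^*)=0$, substitute $\alpha=1/\lambda$ to obtain the squared contraction factor $1-\tfrac{2\mu}{\mu+\lambda}$, and then pass to the linear factor $1-\tfrac{\mu}{\mu+\lambda}$ using $\mu\le\lambda$. Your write-up is in fact more explicit than the paper's about why the $\|\nabla L(\bm{\theta}^t)\|^2$ coefficient is non-positive and why the square-root step is legitimate, but the route is the same.
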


\myparatight{Proof of Theorem~\ref{Theorem1}} Given the lemmas presented earlier, we proceed to establish Theorem~\ref{Theorem1}. In the $t$-th global training round, the following holds:
\begin{align}
\label{theorem_one_first}
\| \bm{\theta}^{t +1} -\bm{\theta}^* \| = \| \bm{\theta}^t - \alpha \nabla \bm{g}(\bm{\theta}^t) - \bm{\theta}^* \|.
\end{align}

Applying the triangle inequality to the gradient terms, the right-hand side of Eq.~(\ref{theorem_one_first}) satisfies:
\begin{align}
\label{theorem_one_two}
\leq \| \bm{\theta}^t - \alpha \nabla L(\bm{\theta}^t) - \bm{\theta}^* \| + \alpha \| \nabla \bm{g}(\bm{\theta}^t) - \nabla L(\bm{\theta}^t) \|.
\end{align}

Based on the above Lemmas~\ref{trim_lem}-\ref{general_lem}, this can be simplified to:
\begin{align}
\label{theorem_one_three}
\leq (1-\frac{\mu}{\mu+\lambda}) \| \bm{\theta}^t - \bm{\theta}^* \| + \frac{B_1}{\lambda},
\end{align}
where $B_1$ is defined as:
\begin{align}
B_1 &=\mathcal{O}((\frac{\rho c d}{\upsilon (n+m) \sqrt{Q}}  \nonumber \\ 
& \quad + \frac{\rho d}{\upsilon \sqrt{(n+m) Q}}) \sqrt{ \log((n+m) \lambda \varpi Q)}).
\end{align}

Applying the condition \(\alpha=\frac{1}{\lambda}\), and since \(\mu \le \lambda \), then after $T$ global training rounds, one can further have the following:
\begin{align}
 \left\| \bm{\theta}^T - \bm{\theta}^* \right\| \le 
  (1-\frac{\mu}{\mu+\lambda})^T   \left\| \bm{\theta}^0 - \bm{\theta}^* \right\| + \frac{2 B_1}{\mu}.
\end{align}

This concludes the convergence proof of the global model under attack to the optimal point under the given conditions and assumptions.

\begin{lem}
\label{median_lem}
%
%
Assuming Assumptions~\ref{assumption_1} to \ref{assumption_4} hold, our proposed framework, \alg, employs the Median aggregation rule to integrate synthetic updates and client-contributed updates. Given $\upsilon > 0$, if $\beta=\frac{f}{n+m}$ satisfies $\beta + \epsilon \le \frac{1}{2} - \upsilon$, then after $T$ rounds of global training, the probability of achieving the following outcome is guaranteed to be at least $1 - \frac{4d}{(1 + (n+m) \lambda \varpi Q)^d}$:
\begin{align}
\left\|  \bm{g}(\bm{\theta}^t) -  \nabla L(\bm{\theta}^t)   \right\| \le B_2,
 \end{align}
where $\bm{g}(\bm{\theta}^t)$ is the global model update, 
$\epsilon$ is defined as
$\epsilon=\frac{0.4748  \zeta}{\sqrt{Q}} + \sqrt{\frac{d\log(1+(n+m) \lambda \varpi Q)}{(n+m)(1-\beta)}} $,
$B_2=\frac{2\sqrt{2}}{(n+m) Q} + \frac{2 \sqrt{\pi} \sigma (\beta + \epsilon) \exp(\frac{1}{2}(\Phi^{-1}(1-\upsilon))^2)}{\sqrt{Q}}$.
\begin{proof}
 The proof follows a similar approach to Theorem 8 in~\cite{yin2018byzantine}, and we omit it here for brevity.
\end{proof}
\end{lem}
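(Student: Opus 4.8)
The plan is to obtain Lemma~\ref{median_lem} as the Median-aggregation counterpart of Lemma~\ref{trim_lem}, following the coordinate-wise median analysis of \cite{yin2018byzantine} (Theorem~8 there), with one structural adaptation to handle the synthetic updates. First I would note that the $m$ synthetic updates are generated from the clean set $D_0$, drawn from the same distribution as $D=\bigcup_i D_i$ and processed by the same one-step gradient map with client rate $\alpha=1/\lambda$; hence, in round $t$, they are statistically interchangeable with the updates of benign clients. Consequently the $n+m$ vectors fed into the coordinate-wise Median split into a \emph{good} pool of $n-f+m=(1-\beta)(n+m)$ updates whose $k$-th coordinates are mutually independent with common mean $\partial_k L(\bm{\theta}^t)$, and a \emph{bad} pool of $f=\beta(n+m)$ adversarially chosen vectors. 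Assumption~\ref{assumption_4} (bounded second moment $\zeta$, bounded variance $\sigma^2$), together with the effective per-node sample size $Q$, controls the spread of the good coordinates. Everything then reduces to a per-coordinate statement, to be collected over $k\in[d]$ (a union bound together with an $\ell_\infty$-to-$\ell_2$ passage, the $d$-accounting being tracked exactly as in \cite{yin2018byzantine}).

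Second, I would carry out the ``median lies inside the good range'' argument. Since $\beta+\epsilon\le\tfrac12-\upsilon<\tfrac12$, with the claimed probability the coordinate-wise median of the $n+m$ values lies between two empirical quantiles of the good coordinates at levels within $\mathcal{O}(\beta+\delta)$ of $\tfrac12$, where $\delta$ is an empirical-CDF slack. It therefore suffices to bound how far such near-median quantiles of the good coordinates can be from $\partial_k L(\bm{\theta}^t)$. This splits into (i) the population mean--median gap of $\partial_k l(\bm{\theta}^t,\cdot)$, which a sharp Markov/variance argument bounds by $\mathcal{O}(\zeta/\sqrt{Q})$ with the numerical constant $0.4748$ of \cite{yin2018byzantine}, feeding the first summand of $\epsilon$; and (ii) an anti-concentration step, in which a quantile displacement of size $\beta+\epsilon$ away from level $\tfrac12$ translates into a value displacement of at most $\mathcal{O}\!\big(\sigma(\beta+\epsilon)/\sqrt{Q}\big)$ times the reciprocal of the relevant density, which is where the factor $\sqrt{\pi}\,\exp\!\big(\tfrac12(\Phi^{-1}(1-\upsilon))^2\big)$ (essentially $1/\phi(\Phi^{-1}(1-\upsilon))$ up to constants) enters and yields the second summand of $B_2$. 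The residual lower-order term $2\sqrt{2}/((n+m)Q)$ then arises from the discretization rounding in the next step.

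Third, I would supply the concentration machinery that (a) pins $\delta$ and the second summand $\sqrt{d\log(1+(n+m)\lambda\varpi Q)/((n+m)(1-\beta))}$ of $\epsilon$, and (b) produces the failure probability $4d/(1+(n+m)\lambda\varpi Q)^d$. Fixing $\bm{\theta}$ and $k$, a Hoeffding/DKW bound on the empirical CDF of the $(1-\beta)(n+m)$ good $k$-th coordinates gives a uniform deviation of order $\sqrt{\log(1/\eta)/((n+m)(1-\beta))}$ with probability $1-\eta$. Then I would make this uniform over $\bm{\theta}\in\Theta$ (diameter $\varpi$ by Assumption~\ref{assumption_3}) with an $r$-net at resolution $r\asymp 1/((n+m)\lambda\varpi Q)$, using $\lambda$-smoothness to transfer from net points to arbitrary $\bm{\theta}$, over all $d$ coordinates, and over the constantly many ($\le 4$) tail events; choosing $\eta\asymp (1+(n+m)\lambda\varpi Q)^{-2d}$ makes the net cardinality $(1+(n+m)\lambda\varpi Q)^{d}$ cancel one power and leaves the stated $4d/(1+(n+m)\lambda\varpi Q)^d$. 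Collecting (i), (ii), and the net-resolution term yields $\big|\,\mathrm{median}_k-\partial_k L(\bm{\theta}^t)\,\big|\le B_2$ for every $k$ simultaneously, whence $\|\bm{g}(\bm{\theta}^t)-\nabla L(\bm{\theta}^t)\|\le B_2$. Lemma~\ref{median_lem} then plugs into Lemma~\ref{general_lem} exactly as in the proof of Theorem~\ref{Theorem1} to give the $T$-round bound of Theorem~\ref{Theorem2}.

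The main obstacle is step two's sharp anti-concentration and mean--median-gap estimates under only a variance and second-moment hypothesis (no Gaussianity): tracking the exact constants $0.4748$ and $\sqrt{\pi}\exp(\tfrac12(\Phi^{-1}(1-\upsilon))^2)$ requires the careful density-lower-bound argument of \cite{yin2018byzantine} rather than a crude Chebyshev bound, and it is precisely this calculation that the present proof defers to Theorem~8 of \cite{yin2018byzantine}. The only genuinely new ingredient is the bookkeeping in step one --- verifying that the synthetic pool inherits the benign gradient distribution, so that the effective Byzantine fraction is exactly $\beta=f/(n+m)$ and the good-pool size is $(1-\beta)(n+m)$; once that is in place, the remainder is a faithful adaptation of the existing median analysis.
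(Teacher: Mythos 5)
Your proposal matches the paper's approach: the paper's entire proof of this lemma is a deferral to Theorem~8 of Yin et al.~\cite{yin2018byzantine}, and your sketch is a faithful reconstruction of that coordinate-wise median argument (normal-approximation/mean--median gap, quantile anti-concentration, DKW-type concentration with a covering net over $\Theta$, union bound over $d$ coordinates), together with the same bookkeeping the paper implicitly uses, namely treating the $m$ synthetic updates as clean updates from $D_0$ so the Byzantine fraction becomes $\beta=f/(n+m)$ with per-node sample size $Q$. The only cosmetic quibble is that the constant $0.4748$ arises in Yin et al.\ from a Berry--Esseen normal-approximation bound rather than a Markov/variance mean--median estimate, but since you defer that calculation to the cited theorem exactly as the paper does, this does not affect the argument.
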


\subsection{Proof of Theorem~\ref{Theorem2}} 
\label{sec:appendix_2}
The proof of Theorem~\ref{Theorem2} follow the same procedure as that of Theorem~\ref{Theorem1}. 
%
The only difference is that we simplify Eq.~(\ref{theorem_one_two}) using Lemma~\ref{general_lem} and Lemma~\ref{median_lem} rather than Lemmas~\ref{trim_lem}-\ref{general_lem}. For brevity, we omit the full proof.


 \begin{table}[htb!]
	\caption{CNN architectures.}
	\centering
	 \subfloat[The default CNN architecture.]
	{
	\begin{tabular}{|c|c|} \hline 
		{Layer} & {Size} \\ \hline
		{Input} & { $28\times28\times1$}\\ \hline
		{Convolution + ReLU} & { $3\times3\times30$}\\ \hline
		{Max Pooling} & { $2\times2$}\\ \hline
		{Convolution + ReLU} & { $3\times3\times50$}\\ \hline
		{Max Pooling} & { $2\times2$}\\ \hline
		{Fully Connected + ReLU} & {100}\\ \hline
		{Softmax} & {10}\\ \hline
	\end{tabular}
	\label{cnn_arch}
	}
    \\
	 \subfloat[An alternative CNN architecture.]
	{
		\begin{tabular}{|c|c|} \hline 
		{Layer} & {Size} \\ \hline
		{Input} & { $28\times28\times1$}\\ \hline
		{Convolution + ReLU} & { $3\times3\times30$}\\ \hline
		{Fully Connected + ReLU} & {100}\\ \hline
		{Softmax} & {10}\\ \hline
	\end{tabular}
        \label{cnn_arch_diff}
	}
\end{table}

\begin{table*}[htbp]
  \centering
  \addtolength{\tabcolsep}{-1.985pt}
  \caption{Results of different FL methods on CelebA and CIFAR-10 datasets. The results of Scaling attack are shown as ``testing error rate / attack success rate''.}
\label{result_all_datasets_CelebA_cifar10}
        \vspace{-0.06in}
       \subfloat[CelebA dataset.]
   {
    \begin{tabular}{l|cccccccc}
    \hline
    Aggregation rule & No attack   & LF attack    & Gaussian attack & Trim attack & Krum attack & Min-Max attack & Min-Sum attack & \multicolumn{1}{c}{Scaling attack} \\
     \hline
    FedAvg  & 0.23  & 0.33 & 0.37  & 0.44 &   0.56 & 0.56 &  0.57 & 0.45 / 0.02 \\
    \hline
    Trim-mean &0.31  & 0.33 & 0.30  & 0.48  & 0.36 & 0.56  &  0.52  & 0.34 / 0.07 \\
    GAS + Trim-mean & 0.33 & 0.34  & 0.34 & 0.48 & 0.53 &  0.48 &  0.51 & 0.36 / 0.03  \\
    Gaussian + Trim-mean &  0.23  &  0.30  & 0.53 &  0.32 & 0.42 &  0.54  & 0.52  & 0.48 / 0.05 \\
    \rowcolor{greyL}
    \alg + Trim-mean & 0.23  &  0.23   &  0.23 &  0.23  &  0.25  &  0.23 &  0.23  & 0.24 / 0.02 \\
     \hline
    Median &  0.31  &  0.32 & 0.31  & 0.46 &0.35 &  0.47 &   0.47 & 0.33 / 0.05 \\
    GAS + Median &   0.32 & 0.35  &  0.35 &  0.48 & 0.53  &  0.48 & 0.51 & 0.35 / 0.03 \\
    Gaussian + Median & 0.25  &  0.25 & 0.53 & 0.41  & 0.44   & 0.56 & 0.57  & 0.49 / 0.04 \\
    \rowcolor{greyL}
    \alg + Median &  0.24     &  0.25  & 0.25  &  0.26   &  0.24  & 0.25  & 0.24  & 0.26 / 0.02 \\
     \hline
    \end{tabular}%
    }
       \\
        \vspace{-0.06in}
       \subfloat[CIFAR-10 dataset.]
   {
    \begin{tabular}{l|cccccccc}
    \hline
    Aggregation rule & No attack   & LF attack    & Gaussian attack & Trim attack & Krum attack & Min-Max attack & Min-Sum attack & \multicolumn{1}{c}{Scaling attack} \\
     \hline
    FedAvg  & 0.22 & 0.30 & 0.72  & 0.83 & 0.23   & 0.23 &  0.23 & 0.90 / 1.00 \\
    \hline
    Trim-mean & 0.25 & 0.32 & 0.26  & 0.79  & 0.26   & 0.25 & 0.25  & 0.28 / 0.96\\
    GAS + Trim-mean & 0.25& 0.29 & 0.80  & 0.90 & 0.25   & 0.25 & 0.26  &  0.37 / 0.93 \\
    Gaussian + Trim-mean & 0.28 & 0.38 & 0.90  & 0.71 & 0.33   & 0.28 & 0.32  & 0.99 / 1.00 \\
    \rowcolor{greyL}
    \alg + Trim-mean &0.22 & 0.23 & 0.22  & 0.25 &  0.22  & 0.23 & 0.23  & 0.22 / 0.02 \\
     \hline
    Median & 0.25 & 0.30  & 0.25  & 0.84 & 0.30   & 0.27& 0.26  & 0.28 / 0.96 \\
    GAS + Median &0.25 & 0.26 & 0.78  & 0.90 & 0.28   & 0.25 & 0.25  & 0.27 / 0.89\\
    Gaussian + Median & 0.32 & 0.72 & 0.85  & 0.76 & 0.80   &0.67 &  0.75 & 0.90 / 1.00 \\
    \rowcolor{greyL}
    \alg + Median & 0.23 & 0.25  & 0.23  & 0.24 &  0.23  & 0.23 & 0.23  & 0.24 / 0.02\\
     \hline
    \end{tabular}%
    \label{result_all_datasets_cifar10}
    }
\end{table*}%

\begin{table*}[htbp]
  \centering
  \addtolength{\tabcolsep}{-1.985pt}
  \caption{Results of different defense approaches on a distinct CNN architecture.}
    \begin{tabular}{l|cccccccc}
    \hline
    Aggregation rule & No attack   & LF attack    & Gaussian attack & Trim attack & Krum attack & Min-Max attack & Min-Sum attack & \multicolumn{1}{c}{Scaling attack} \\
     \hline
    FedAvg  &  0.08  & 0.08  & 0.12   & 0.23 & 0.09   & 0.17  &  0.18    &  0.47 / 0.66 \\
    \hline
    Trim-mean &  0.10  &  0.11 &  0.10 & 0.24  & 0.10   &  0.20 &  0.25    & 0.12 / 0.02  \\
    GAS + Trim-mean & 0.08   & 0.08  & 0.11  & 0.22  & 0.09   & 0.13  & 0.13     &  0.39 / 0.42 \\
    Gaussian + Trim-mean &  0.08  &  0.91 & 0.91  & 0.91 &  0.08  & 0.13  &   0.12   &  0.91 / 1.00 \\
    \rowcolor{greyL}
    \alg + Trim-mean & 0.08   & 0.08   & 0.08   & 0.08  & 0.08    & 0.08   & 0.08      &  0.08 / 0.01 \\
     \hline
    Median & 0.09   & 0.19  & 0.20  & 0.25 &  0.09  & 0.15  &  0.14    & 0.15 / 0.02  \\
    GAS + Median &  0.08  &  0.08  & 0.13  & 0.28 & 0.08   & 0.13  & 0.13      & 0.55 / 0.61  \\
    Gaussian + Median & 0.08   &  0.16 & 0.31  & 0.75 & 0.08  & 0.11  &  0.10    & 0.89 / 1.00  \\
    \rowcolor{greyL}
    \alg + Median &  0.08  & 0.09   & 0.08   &  0.08 &  0.08   & 0.08   &    0.08   &  0.08 / 0.01 \\
     \hline
    \end{tabular}%
     \label{result_table_diff_arch}%
\end{table*}%

\begin{table*}[htbp]
  \centering
  \addtolength{\tabcolsep}{-1.985pt}
  \caption{Results of different defenses when only a subset of clients are selected in each training round.}
    \begin{tabular}{l|cccccccc}
    \hline
    Aggregation rule & No attack   & LF attack    & Gaussian attack & Trim attack & Krum attack & Min-Max attack & Min-Sum attack & \multicolumn{1}{c}{Scaling attack} \\
     \hline
    FedAvg  & 0.06  &  0.08 & 0.12   &  0.13 & 0.06  &  0.75 &  0.75   & 0.68 / 0.81  \\
    \hline
    Trim-mean &  0.06 & 0.07 & 0.07   & 0.26  &  0.08 & 0.11  &  0.15   & 0.13 / 0.02  \\
    GAS + Trim-mean & 0.06  &  0.06 &  0.12  & 0.22 &  0.06 & 0.09  &  0.09   & 0.36 / 0.41  \\
    Gaussian + Trim-mean  & 0.06  & 0.12 &  0.91  &  0.91   & 0.06  & 0.07  & 0.06   &  0.89 / 0.92 \\
    \rowcolor{greyL}
    \alg + Trim-mean & 0.06  & 0.06 & 0.06   & 0.06  & 0.06  & 0.06  &   0.06  & 0.06 / 0.02  \\
     \hline
    Median & 0.06  & 0.08 & 0.07   & 0.28  & 0.29  & 0.11  &  0.13   & 0.06 / 0.02  \\
    GAS + Median & 0.06  & 0.06 & 0.12    & 0.25   & 0.07  & 0.09  &  0.10   &  0.55 / 0.65 \\
    Gaussian + Median & 0.06  & 0.90 & 0.90   & 0.90  & 0.06  & 0.09   &  0.09    & 0.91 / 1.00  \\
    \rowcolor{greyL}
    \alg + Median & 0.06  & 0.06  &  0.06   & 0.06   & 0.06   & 0.06   &   0.06   & 0.06 / 0.02  \\
     \hline
    \end{tabular}%
     \label{result_table_subset_select}%
\end{table*}%

\begin{table*}[htbp]
  \centering
  \addtolength{\tabcolsep}{-1.985pt}
  \caption{Transferability of \alg.}
    \begin{tabular}{l|cccccccc}
    \hline
    Aggregation rule & No attack   & LF attack    & Gaussian attack & Trim attack & Krum attack & Min-Max attack & Min-Sum attack & \multicolumn{1}{c}{Scaling attack} \\
     \hline
     \alg + Krum  & 0.07 & 0.07 &  0.07 & 0.07  & 0.85 & 0.08  & 0.08 & 0.08 / 0.01 \\
     \alg + FoolsGold  & 0.06 & 0.09 & 0.08 & 0.13  & 0.06 & 0.08 & 0.08 & 0.09 / 0.02 \\
     \alg + FLAME  & 0.07 & 0.08 & 0.08 & 0.10  & 0.07 & 0.07 & 0.07 & 0.07 / 0.02\\
     \hline
    \end{tabular}%
     \label{result_transferability}
\end{table*}%


\end{document}